\numberwithin{equation}{section}
\theoremstyle{plain}
\newtheorem{thm}{\protect\theoremname}
\renewcommand{\phi}{\varphi}
\theoremstyle{remark}
\newtheorem{prop}{\textnormal{\bfseries Proposition}}
\newtheorem{lemma}{\textnormal{\bfseries Lemma}}
\newtheorem{corr}{\textnormal{\bfseries Corollary}}
\providecommand{\theoremname}{Theorem}
\begin{document}
\title{\textbf{Axiomatic Foundations of Bayesian Persuasion}\thanks{
We would like to thank the audiences at 2025 SAET Conference (Ischia). 
Takeoka gratefully acknowledges financial support from JSPS KAKENHI Grant Number JP25K00619.}}
\author{Youichiro Higashi\thanks{
Faculty of Economics, Okayama University, 3-1-1 Tsushima-naka, Kita-ku, Okayama 700-8530, Japan. 
Email: higash-y@okayama-u.ac.jp}, \hspace{2mm}
Kemal Ozbek\thanks{Department of Economics, University of Southampton, University Road, 
Southampton, S017 1BJ, United Kingdom. Email: mkemalozbek@gmail.com}, \hspace{2mm}
Norio Takeoka\thanks{Department of Economics,
Hitotsubashi University, 2-1 Naka, Kunitachi, Tokyo 186-8601, Japan. 
Email: norio.takeoka@r.hit-u.ac.jp.}}

\maketitle

\begin{abstract}
{\normalsize{} In this paper, we study axiomatic foundations of Bayesian persuasion, where a principal (i.e., sender) delegates the task of choice making after informing a biased agent (i.e., receiver) about the payoff relevant uncertain state (see, e.g., \cite{KG11}). Our characterizations involve novel models of Bayesian persuasion, where the principal can steer the agent's bias after acquiring costly information.  Importantly, we provide an elicitation method using only observable menu-choice data of 
the principal, which shows how to construct the principal's subjective costs of acquiring information even when he anticipates managing the agent's  bias.}{\normalsize\par}
\end{abstract}

\section{Introduction}

Many economic activities involve a principal and an agent interacting in a choice situation. The principal delegates the task of making choices to the agent in the hope that informed bias-free decisions will be made. To reduce the effects of a bias between himself and the agent, the principal commits to a choice set restricting the options the agent can have; moreover, to inform the agent about the payoff relevant state, the principal acquires information after committing to the choice set. This model of decision-making, at least since the seminal work of \cite{KG11}, is known as the Bayesian persuasion model.

In this paper, we study axiomatic foundations of Bayesian persuasion. We consider a set of suitable axioms for the principal's preferences over choice sets (i.e., menus) and show that these axioms characterize a number of Bayesian persuasion models. Importantly, we show how to elicit the principal's subjective costs by using choice data. The elicitation method is constructive and robust to endogenous changes in the agent's biases.  Our elicitation result helps us to provide a comparative statics exercise that offers a behavioral comparison of differing costs. To the best of our knowledge, ours is the first work in the literature that considers the scenario where the agent's bias may be uncertain and, therefore, the principal may have an incentive to manage the agent's bias after acquiring costly information.

Among the behavioral implications we identify, the monotonicity axioms are key for the Bayesian persuasion models. There are many models of menu-choice that consider only "negative states" (e.g., temptations as in \cite{GP01}; internal-conflicts as in \cite{MO18}; delegations as in \cite{kopylov2020revealed}), or that consider only "positive states" (e.g., subjective states as in \cite{DLR01}; exogenous posteriors as in \cite{DLST14}; endogenous posteriors as in \cite{dDMO17}). Unlike these models, the Bayesian persuasion model does not imply only a desire for commitment or a desire for flexibility. It is possible that in some choice situations the principal may prefer smaller sets to limit the losses due to the agent's biases, and in some other choice situations the principal may prefer bigger sets so as not to miss the options better suited with the realized information about the true state.

Our monotonicity axioms, in terms of observable menu-choice data, provide suitable comparisons to discipline the two channels that can affect the value of a menu: potential biases and uncertain states. In particular, our taste-dominance axiom (Axiom $5$) requires that if a constant menu $A$ (containing only state-independent acts for which information has no value) leads to a better choice than another constant menu $B$ regardless of the agent's potential bias, then $A$ should be weakly preferred. Meanwhile, our information-dominance axiom (Axiom $6$) requires that if a regular menu A (containing also state-dependent acts for which information has value) leads to a better choice than another regular menu B regardless of the agent's posterior belief, then A should be weakly preferred. We show that all Bayesian persuasion models we study satisfy both monotonicity axioms, providing a common ground for these models (see Theorems \ref{known}-\ref{sequential}).

In general, the costs of acquiring information or managing bias are subjective and, therefore, they are not observable. We provide an elicitation method which uses only observable menu-choice data to construct these costs. Specifically, we apply our elicitation method using our most general model of Bayesian persuasion, namely the sequential costly Bayesian persuasion model (see Theorem \ref{ident}). The signifying behavioral implication of the sequential model is the exposure axiom (Axiom 8), which requires that for every menu $A$ there exists an information structure such that the relative value of this particular information structure will be the most with the menu $A$. In the sequential model, the principal first acquires costly information knowing that once a posterior belief is realized, he will then exert costly effort to manage the agent's bias. As such, there are two types of costs for the principal; one for acquiring information about the payoff relevant true state and one for managing the bias of the agent. 

There are related studies that utilize a similar elicitation method that we apply, such as \cite{dDMO17}, who identify subjective information costs, and \cite{MO18}, who identify subjective self-regulation costs. Although our identification of the costs of managing bias is close to \cite{MO18}'s method of using constant menus and singleton equivalents, our identification of costs of information acquisition significantly differs from \cite{dDMO17}'s method of using regular menus and singleton equivalents. While \cite{dDMO17} use, under each posterior belief, the value of best act in a given menu, we generalize their method and use,  under each posterior belief, the value of the whole induced menu and not just the value of the best act for the commitment ranking of the principal. Thus, since we allow for strategic interactions of two people once a posterior is determined and not just the strategic rationality of a single person (see \cite{Kreps79}), our elicitation method of subjective information costs is robust to settings even when there is a wedge between the principal and the agent in terms of subjective tastes.

There are several works in related literature that inform our study on Bayesian persuasion. \cite{Jakobsen21} takes as primitive (i) (principal's) preferences over information structures, indexed by menus of acts, and (ii) (agent's) choice correspondences from each menu (indexed by signals) to study the Bayesian persuasion model in a richer choice setting. In another work, \cite{Jakobsen24} considers (principal's) (i) menu preferences, (ii) random choices, and (iii) state-contingent random choices to analyze the Bayesian persuasion model in another rich choice setting. In a more recent work, \cite{Mensch2025} considers (principal's) state-dependent stochastic choice, and provides an axiomatic characterization of the Bayesian persuasion model in his framework. We view our analysis and these contemporary works as complementary. In particular, we offer a new type of Bayesian persuasion model (e.g., the sequential model); it will be interesting to identify the ex-post choice implications of this model using either Jakobsen's or Mensch's framework of stochastic choices.

The remainder of the paper is organized as follows. In section 2, we introduce our framework and formally define our Bayesian persuasion models. In section 3, we list all behavioral implications of Bayesian persuasion and discuss their relation to the specific models. In section 4, we provide our axiomatic analysis including (i) characterizations of four different Bayesian persuasion models, (ii) elicitation of costs of information acquisition and bias management, and (iii) comparative statics results on these costs. In section 5, we discuss some related choice models in detail. In section 6, we conclude. Proofs of all results given in the text are provided in an Appendix.

\section{Framework and model}

In this section, we develop our framework and define the Bayesian persuasion model and its several extensions.

\subsection*{Environment}

Let $X$ be a finite set of $n$ prizes, with typical elements $x,y,z\in X$
called outcomes. 
$\Delta(X)$ denotes the set of all probability distributions
on $X$, with typical elements $a,b,c\in\Delta(X)$ called lotteries. $S$ is a finite set of $k$ states, $s_{1},s_{2},...,s_{k}$, representing the uncertainty.
${\cal P}=\Delta(S)$ denotes the set of all probability distributions
on $S$, with typical elements $p,q,r\in\Delta(S)$ called beliefs.
$F$ denotes the set of all functions $S\to\Delta(X)$, with typical
elements $f,g,h$ called acts. 
$F^{c}$ denotes the set of all constant functions in $F$. 
With some abuse of notation, we sometimes denote typical elements by $a,b,c\in F^{c}$. 
$\mathbb{A}$ denotes the set of non-empty closed subsets of $F$, with typical elements $A,B,C\in\mathcal{\mathbb{A}}$ called menus. 
$\mathbb{A}^{c}$ denotes the set of closed non-empty subsets of $F^{c}$. 
For $f\in F$ and $p\in\Delta(S)$, let $f^{p}\in F^{c}$
be the induced constant act such that $f^{p}(x)=\sum_{s\in S}p(s)f(s)(x)$
for all $s\in S$. 
For $A\in\mathbb{A}$, let $A^{p}=\{f^{p}\in F^{c}:f\in A\}$ denote the induced constant menu under the belief $p\in\cal{P}$.

Our primitive is a binary relation $\succsim$ over the set of menus
$\mathbb{A}$, with an asymmetric part $\succ$ and a symmetric part $\sim$.
$A\succsim(\succ)B$ means menu $A$ is ``weakly (strictly) preferred''
over menu $B$. The principal chooses a menu in an ex-ante period,
anticipating that he will inform and/or manage an agent in an interim
period, before the agent selects an alternative from the menu in an
ex-post period after her taste and belief are realized. We call the
restriction of $\succsim$ to the set of singleton menus \emph{commitment
ranking}, which represents the principal's preferences over singleton
menus. For any $A\in\mathbb{A}$, let $f_A\in F$ denote the singleton equivalent such that $\{f_A\} \sim A$ and let $x_{A}\in F^{c}$ denote the constant equivalent such that $\{x_{A}\}\sim A$. We denote the restriction of $\succsim$ to the set of constant menus by $\succsim^c$. We say a functional $U:\mathbb{A}\to\mathbb{R}$ represents
$\succsim$ when, for all menus $A$ and $B$, $A\succsim B\;\Leftrightarrow\;U(A)\ge U(B)$.

Let $\mathcal{V}=\{v\in\mathbb{R}^{n}:\sum_{i=1}^{n}v_{i}=0 \; \text{and}\;v\cdot v=1\} $,
with typical elements $u,v,w\in\mathcal{V}$ called\emph{ utilities}
representing differing tastes. Let $\Delta({\cal V})$ denote the set of distributions over $\cal{V}$. We endow $\Delta(\cal{V})$ with the weak*-topology. Note that for any non-constant $w\in\mathbb{R}^{n}$,
there exists a unique $v_{w}\in\mathcal{V}$ such that, for all $a,b\in\Delta(X)$, $w(a)\ge w(b)\;\;\;\text{if and only if}\;\;\;\;v_{w}(a)\ge v_{w}(b)$ where $w(a)$ means $w\cdot a$ for all $w\in\mathbb{R}^{n}$ and
$a\in\Delta(X)$. We also write $w(f)$ to indicate a \emph{utility
act} $S\to w(\Delta(X))$ for all $w\in{\cal V}$, where $w(f)_{i}=w(f_{i})$
for all $i=1,...,k$. We say two utilities $u,v\in{\cal V}$ \emph{conflict} on the ranking
of $a,b\in\Delta(X)$ whenever $u(a)>u(b)$ and $v(a)<v(b)$, or vice
versa. Given $p\in P$, $u$ and $v$ conflict in menu $A$ whenever
$m_{u,p}(A)\cap m_{v,p}(A)=\emptyset$ where for any $w\in{\cal V}$,
set $m_{w,p}(A)=\arg\max_{f\in A}w(f)\cdot p$ denotes the acts in
menu $A$ that maximize $w$ under $p$. We will denote by $u\in{\cal V}$
the \emph{utility} representing the principal's commitment ranking,
and let ${\cal V\setminus}\{u\}$ be the set of all utilities conflicting
with $u$ in general, that is, ${\cal V\setminus}\{u\}$ is the set
of all utilities which has a \emph{bias relative to} $u\in{\cal V}$.

Denote the mixture of $A$ and $B$ by $A\alpha B=\left\{ \alpha f+(1-\alpha)g\in F\,:\,f\in A,\:g\in B\right\}$  for any $\alpha\in[0,1]$ where $h=\alpha f+(1-\alpha)g$ denotes a mixture act such that $h(s)=\alpha f(s)+(1-\alpha)g(s)$
for each $s\in S$. Let $\mathrm{co}(A)$ denote the convex hull of $A\in \mathbb{A}$. For any $A\in\mathbb{A}$ and $\tau\in\Delta({\cal P})$, let $x_{A}(\tau)=\int x_{A^{p}}\,\tau(dp)$ denote the $\tau$-mixture of constant equivalent acts $x_{A^p}$ corresponding to $A^p$ for each $p\in\cal{P}$. For any $p\in \cal{P}$, let $\delta_p \in \Delta(\cal{P})$ denote the degenerate measure that places the point mass on $p$. We will denote by $p_{0}\in{\cal P}$ the \emph{prior belief} governing
the principal's initial uncertainty about the true state. Let $\Delta_{0}({\cal P})$
denote the set of distributions $\tau$ over ${\cal P}$ such that
$\int_{\cal{P}}p\,\tau(dp)=p_{0}$; that is, $\Delta_{0}({\cal P})$ represents
possible information structures that can be
obtained in our framework.
Let $\Delta_0(\cal{P}\times\cal{V})$ denote the set of distributions over the joint space $\cal{P}\times\cal{V}$ such that $\int_{\cal{P}}p\,\pi_{\cal{P}}(dp)=p_{0}$ where $\pi_{\cal{P}}$ is the marginal of $\pi\in\Delta(\cal{P}\times\cal{V})$ over $\cal{P}$. We endow both $\Delta_0({\cal P})$ and $\Delta_0({\cal P}\times\mathcal{V})$ with the weak*-topology.

\subsection*{Bayesian persuasion}
We now introduce the Bayesian persuasion models that we axiomatically study.

\paragraph*{Persuasion with known bias:}

Persuasion is a way of controlling the agent's information by inducing
a random realization of beliefs. Following \cite{KG11},
we say that the principal's preferences $\succsim$ over menus can
be represented by a \emph{Bayesian persuasion} (with known bias)
model if for all $A\in\mathbb{A}$, we have 
\[
U(A)=\max_{\tau \in \Gamma}\int_{{\cal P}}b_{A}^{u,v}(p)\,\tau(dp)
\]
where $\Gamma\subset\Delta_{0}({\cal P})$ is some closed convex set and
$b_{A}^{u,v}(p)=\max_{f\in m_{v,p}(A)}u(f)\cdot p$ is the Strotz function with some utility function $v\in{\cal V}$.
In this case, we say $\succsim$ is represented by the tuple $(u,p_{0},\Gamma,v)$. We also call this preference (model) in short, the persuasion with known bias preference (model).

\paragraph*{Persuasion with uncertain bias:}

The agent's bias may not be known by the principal, but
rather the principal may have a belief about the bias. We say that the principal's preferences $\succsim$ over menus can
be represented by a \emph{Bayesian persuasion} (with uncertain bias)
model if for all $A\in\mathbb{A}$, we have 
\[
U(A)=\max_{\tau \in \Gamma}\int_{{\cal P}} b_{A}^{u,\lambda}(p)\tau(dp)
\]
where $\Gamma\subset\Delta_{0}({\cal P})$ is some closed convex  set and $b_{A}^{u,\lambda}(p)=\int_{{\cal V}}\left(\max_{f\in m_{v,p}(A)}u(f)\cdot p\right)\lambda(dv)$
is a random Strotz function and $\lambda$ is a typically non-degenerate
distribution over ${\cal V}$. In this case, we say $\succsim$ is
represented by the tuple $(u,p_{0},\Gamma,\lambda)$. We also call this preference (model) in short, the persuasion with uncertain bias preference (model).

\paragraph*{Costly persuasion:}

The principal may incur costs to optimally choose the information structure
to persuade the agent (with known or uncertain bias). 
Such a generalization of Bayesian persuasion is studied by 
\cite{gentzkow2014costly}. 
We say that the principal's preferences $\succsim$ over menus can
be represented by a \emph{costly Bayesian persuasion} model if for
all $A\in\mathbb{A}$, we have 
\[
U(A)=\max_{\tau\in\Delta_{0}({\cal P})}\left[\int_{{\cal P}} b_{A}^{u,\lambda}(p)\tau(dp)\,-\,c_{{\cal P}}(\tau)\right]
\]
where $c_{{\cal P}}:\Delta_{0}({\cal P})\to[0,\infty]$ is a proper
lower semi-continuous cost function. In this case, we say that $\succsim$
is represented by the tuple $(u,p_{0},c_{{\cal P}},\lambda)$. We sometimes call this preference (model) in short, the costly persuasion preference (model).

\paragraph*{Sequential persuasion:}

In general, after acquiring costly information, the principal can try to manage the agent's bias by incurring costs. We say that the principal's preferences $\succsim$ over menus can be represented by  a \emph{sequential costly Bayesian persuasion} model if for all $A\in\mathbb{A}$,
\[
U(A)=\max_{\tau\in\Delta_{0}({\cal P})}\left[\int_{{\cal P}}\left(\max_{\lambda\in\Delta(\mathcal{V})}[b_{A^{p}}^{u}(\lambda)-c_{{\cal V}}(\lambda)]\right)\tau(dp)\,-\,c_{{\cal P}}(\tau)\right]
\]
 where $b_{A^{p}}^{u}(\lambda)=\int_{{\cal V}}\,[\,\max_{f\in m_{v,p}(A)}u(f)\cdot\, p\,]\,\lambda(dv)$, and both
 $c_{{\cal P}}:\Delta_{0}({\cal P})\to[0,\infty]$
and $c_{{\cal V}}:\Delta({\cal V})\to[0,\infty]$ are lower semi-continuous proper cost functions. In this case, we
say that $\succsim$ is represented by the tuple $(u,p_{0},c_{{\cal P}},c_{{\cal V}})$. We say $c_{{\cal P}}$ and $c_{{\cal V}}$ are monotone if they are increasing in the order of Blackwell informativeness (see, e.g., \cite{dDMO17}) and in the order of stochastic conflicts (see, e.g., \cite{MO18}), respectively.

Note that the sequential costly Bayesian persuasion model can be written more compactly as: for all
$A\in\mathbb{A}$, 
\[
U(A)=\max_{\pi\in\Delta_0({\cal P}\times\mathcal{V})}\left[b_{A}^{u}(\pi)-c(\pi)\right]
\]
where 
\[
b_{A}^{u}(\pi)=\int_{{\cal {\cal P}\times V}}\,\,[\,\max_{f\in m_{v,p}(A)}u(f)\cdot p\,]\,\,\pi(dp\times dv)
\]
for any $\pi\in\Delta_0({\cal P}\times{\cal V})$ and $c:\Delta_{0}({\cal P}\times{\cal V})\to[0,\infty]$ is a separable cost function 
such that $c(\pi)=c_{{\cal P}}(\pi_{{\cal P}})+\int c_{{\cal V}}(\pi_{p})\pi_{{\cal P}}(dp)$
for all $\pi\in\Delta_{0}({\cal P}\times{\cal V})$ where $\pi_{{\cal P}}$
is the marginal distribution over ${\cal P}$ and $\pi_{p}$ is a
conditional distribution over ${\cal V}$ for every $p\in\Delta(S)$. We sometimes call this preference (model) in short, the sequential persuasion preference (model). We say that a cost function, either $c$, or $c_{\cal{P}}$, or $c_{\cal{V}}$, is grounded if it assumes the value $0$ in its effective domain.

\section{Axioms}

In this section, we discuss the axioms that we use for our characterization results. In particular, the persuasion models described above (which are \emph{unobservable})
induce certain properties on the principal's preferences (which
are \emph{observable}). Next, we look at these implications.

\subparagraph*{Standard axioms:}

The following axioms are standard in the menu-choice literature.

\begin{itemize}
\item \textbf{Axiom 1.} (i) $A\succsim B$ or $B\succsim A$, (ii) $A\succsim B$
and $B\succsim C$ implies $A\succsim C$, and (iii) $A\nsim B$ for
some $A,B$.
\end{itemize}

\begin{itemize}
\item \textbf{Axiom 2.} The sets $\left\{ \alpha\in[0,1]\,:\,A\alpha B\succsim C\right\}$ and $\left\{ \alpha\in[0,1]\,:\,C\succsim A\alpha B\right\}$ are closed.
\end{itemize}

\begin{itemize}
\item \textbf{Axiom 3.} If $A\alpha\{f\}\succsim B\alpha\{f\}$, then $A\alpha\{g\}\succsim B\alpha\{g\}$ for all $g\in F$ and $\alpha\in(0,1)$.

\end{itemize}

\begin{itemize}
\item \textbf{Axiom 4.} If $\mathrm{co}(A)=\mathrm{co}(B)$, then $A\sim B$.
\end{itemize}
Axiom $1$ is the usual non-trivial weak order axiom. Axiom $2$
is the mixture-continuity axiom. Axiom $3$ is the singleton-independence axiom reflecting the idea that singleton menus do not affect the hidden actions (i.e., information acquisition or bias management) taken by the principal. Axiom $4$ states that only the extreme alternatives in a choice set are material indicating that both the principal and the agent are expected-utility maximizers.

\subparagraph*{Dominance:}

The following monotonicity axiom is natural for the preferences of a principal concerned with conflicting
utilities. For constant menus $A,B \in \mathbb{A}^c$, let $A\trianglerighteqslant_{\cal{V}} B$ (read $A$ taste-dominates $B$) if  $A_{a}\,1/2\,B_{b}\,\supset\,B_{a}\,1/2\,A_{b}$ for all $a,b\in \Delta(X)$ with $\{a\}\succ\{b\}$, where $C_{d}=\left\{ c\in C:\{c\}\sim\{d\}\right\} $ is the set of lotteries in constant menu $C\in\mathbb{A}^c$ which are indifferent to the lottery $d\in \Delta(X)$.
\begin{itemize}
\item \textbf{Axiom 5.} [Taste Dominance] If $A\trianglerighteqslant_{\cal{V}} B$, then $A\succsim B$.\bigskip{}
\end{itemize}

This axiom reflects the idea that whenever menu $A\in \mathbb{A}^c$, regardless of the agent's bias, leads to a better choice than menu $B\in\mathbb{A}^c$, then $A$ should be deemed as more preferable than $B$. The Taste Dominance axiom first appears in \cite{MO18}, and they discuss it in more detail in the context of single person decision making. 

The following information dominance axiom is more novel to our setting. For menus $A,B \in \mathbb{A}$, let $A\trianglerighteqslant_{\cal{P}} B$ (read $A$ information-dominates $B$) if $A^{p}\succsim B^{p}$ for all $p\in\Delta(S)$.

\begin{itemize}
\item \textbf{Axiom 6.} [Information Dominance] If $A\trianglerighteqslant_{\cal{P}} B$, then $A\succsim B$.
\end{itemize}

Axiom 6 reflects the idea that if menu $A$,  under any posterior belief $p\in \Delta(\cal{P})$, is more preferable than menu $B$, then $A$ should be deemed better than $B$. Thus, learning about the true state is instrumental for the principal.

\subparagraph*{Desire for commitment:}

The principal's ability to take costly actions (acquiring information or managing bias) is revealed by the following axiom.

\begin{itemize}
\item \textbf{Axiom 7.} [Increasing Desire for Commitment] If $A\sim\{f\}\;\text{and}\;B\sim\{g\}$, then $\{f\}\alpha\{g\}\succsim A\alpha B$ for any $\alpha\in(0,1)$.
\end{itemize}

Desire for commitment increases when menus are mixed since \emph{incentives}
for taking costly action change as a consequence of: (i) decrease in utility
gap between better and worse alternatives for each state in a mixed
menu and (ii) resulting in a decrease in the benefits of costly action
(in terms of both acquiring information and managing bias). This axiom first appears in \cite{MO18}, and they discuss it in more detail in the context of single person decision making.

\subparagraph*{Exposure:} 

The principal's ability to sequentially take actions (information acquisition and bias management) is revealed by the following axiom.

\begin{itemize}
 \item \textbf{Axiom 8.} [Exposure] For each $A\in\mathbb{A}$, there exists
$\tau\in\Delta({\cal P})$ such that (i) for all $B\in\mathbb{A}$,
$\{x_A(\tau)\}\,1/2\,\{x_B\}\succsim
\{x_A\}\,1/2\,\{x_B(\tau)\}$ and (ii) for all singleton menus $C,D \in\mathbb{A}$, $\{x_C(\tau)\}\,1/2\,\{x_D\}\sim
\{x_C\}\,1/2\,\{x_D(\tau)\}$.
\end{itemize}

The Exposure axiom implies that for a given menu $A$, there always exists a distribution $\tau$ such that the marginal gain by moving from the menu $A$ to the $\tau$-mixture of it $x_{A}(\tau)$ yields more value than applying the same operation to another menu $B$. Moreover, $\tau$ is valuable only when it is instrumental in making choices. This is absent when the menus are singletons, in which case there is neutral-behavior for comparing marginal gains.

\subparagraph*{Structural axioms:} Independence axioms allow us to understand whether the principal can take hidden actions (e.g., information acquisition or bias management). A form of the independence axiom for preferences over menus can be given by restricting attention to constant menus.

\begin{itemize}
\item \textbf{Axiom 9.} [Constant-menu Independence] For any $A,B,C\in\mathbb{A}^c$
and $\alpha\in(0,1)$, $A\succsim B$ if and only if $A\alpha C\succsim B\alpha C$.
\end{itemize}

Another form of the independence axiom for preferences over menus can be given by considering singleton menus.

\begin{itemize}
\item \textbf{Axiom 10.} [Singleton-menu Independence] For any $A,B\in\mathbb{A}$, $f\in F$, and $\alpha\in(0,1)$, $A\succsim B$ if and only if $A\alpha \{f\}\succsim B\alpha \{f\}$.
\end{itemize}

 If Axiom 10 holds, then we observe that the principal acquires information within a constraint set (see Theorems \ref{known} and \ref{uncertain}), whereas if only Axiom 9 holds, then the principal acquires costly information (see Theorem \ref{costly}). When neither independence axioms hold, this reveals us that the principal not only acquires costly information, but also manages the agent's bias subject to costs (see Theorem \ref{sequential}).

Moreover, we consider an axiom which captures a measure of the principal's uncertainty about the agent's potential biases.

\begin{itemize}
\item \textbf{Axiom 11.} [Reducibility] For any $a,b\in F^{c}$, $\{a,b\}\sim\{a\}$ or $\{a,b\}\sim\{b\}$.
\end{itemize}

The Reducibility axiom reflects the idea that there is a unique utility conflicting with the principal's commitment ranking (see section 5.1 for a related discussion).

\section{Analysis}

In this section, we provide an axiomatic analysis of characterizing the various Bayesian persuasion models listed above starting from the most specific and proceeding towards the most general model. In particular, our results given below show that the preceding axioms characterize the behavior of a principal who chooses among menus ``as if'' he anticipates following a certain type of Bayesian persuasion model. Importantly, we show (at the end of this section) how the costs of acquiring information or managing bias can be constructed by using menu-choice data. We also provide a comparative statics analysis of the costs.

\paragraph{Persuasion with known bias:}
We start our analysis with our most specific Bayesian persuasion model.

\begin{thm} \label{known}
A binary relation on menus $\succsim$ is a Bayesian persuasion (with known bias) preference,
represented by $(u,p_{0},\Gamma,v)$, if and only if it satisfies Axioms
1-11.
\end{thm}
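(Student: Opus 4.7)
The plan is to prove necessity by direct verification and then to establish sufficiency in three layers: pin down the expected-utility parameters $(u,p_0)$ on singletons, identify the bias utility $v$ on constant menus, and finally construct the feasible set $\Gamma$ on all menus. Necessity is routine: each axiom follows from inspection of $U(A)=\max_{\tau\in\Gamma}\int b_A^{u,v}(p)\,\tau(dp)$. In particular, Axiom~7 follows from the convexity of $b_A^{u,v}$ in $A$ under lottery mixtures, Axiom~8 by choosing $\tau_A$ to be a maximizer in the definition of $U(A)$ and using that $\int p\,\tau_A(dp)=p_0$, and Axioms~9--11 because the inner $\max$ reduces to a single deterministic Strotz selection with utility $v$.

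For sufficiency, restrict $\succsim$ to singletons. Axioms 1--3 together with the singleton instance of Axiom~10 produce a standard Anscombe--Aumann representation on acts, and Axiom~4 lets us pass freely between $\{f\}$ and its convex hull. This identifies $u\in\mathcal{V}$ and $p_0\in\mathcal{P}$ with $\{f\}\succsim\{g\}\iff u(f)\cdot p_0\ge u(g)\cdot p_0$. Next, on $\mathbb{A}^c$ the integrand is state-independent, so any representation must reduce to a functional of $(A,v)$ through the selector $\arg\max_{c\in A}v(c)$. Applying Axioms 1, 2, 4, 5, 7, and the constant-menu case of Axiom~9 following the method of \cite{MO18}, one obtains a random-Strotz form $U^c(A)=\int_{\mathcal{V}}[\max_{a\in\arg\max_{c\in A}v(c)}u(a)]\,\lambda(dv)$; Axiom~11 (reducibility) forces $\lambda$ to be a point mass at some $v\in\mathcal{V}$, which is the claimed bias utility.

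To extend to all menus, observe that for any $A\in\mathbb{A}$ and any $p\in\mathcal{P}$ the induced constant menu $A^p$ has, by Step 2, value $U(A^p)=b_A^{u,v}(p)$, because under $p$ the Strotz selection in $A$ against $v$ coincides with that in $A^p$. Axiom~6 (information dominance) then makes the profile $p\mapsto b_A^{u,v}(p)$ the only menu-level information that matters for $\succsim$. Singleton-menu independence (Axiom~10) implies that $U$ is affine along mixtures with singletons, and combined with continuity (Axiom~2) this yields a support-function representation $U(A)=\max_{\tau\in\Gamma}\int b_A^{u,v}(p)\,\tau(dp)$ for some closed convex $\Gamma\subseteq\Delta(\mathcal{P})$. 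Axiom~8 (exposure) guarantees that the maximum is attained at some $\tau_A\in\Gamma$ for every menu $A$, and its singleton indifference clause pins down $\int p\,\tau_A(dp)=p_0$, so $\Gamma\subseteq\Delta_0(\mathcal{P})$.

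The main obstacle is the final step: extracting a single closed convex $\Gamma\subseteq\Delta_0(\mathcal{P})$ that simultaneously supports every menu. The natural route is a duality argument viewing $A\mapsto U(A)$ as a convex, continuous, and mixture-affine functional on the quotient $\mathbb{A}/\{\mathrm{co}(A)=\mathrm{co}(B)\}$, and defining $\Gamma$ as the polar set of linear functionals it dominates. Here the exposure axiom does double duty: it supplies attainment (so $\Gamma$ is non-empty and closed), and via its constant-menu clause it forces the mean constraint $\int p\,\tau(dp)=p_0$ throughout $\Gamma$. Uniqueness of $(u,p_0,\Gamma,v)$, up to the standard positive-affine reparametrization of $u$, then follows from the uniqueness of the Anscombe--Aumann parameters on singletons, the uniqueness of the Strotz bias on $\mathbb{A}^c$, and the uniqueness of the support set of a closed convex functional.
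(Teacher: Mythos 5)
Your overall architecture is legitimate but genuinely different from the paper's: you rebuild the representation directly in three layers, whereas the paper obtains Theorem \ref{known} from Theorem \ref{uncertain}, which in turn rests on Theorems \ref{costly} and \ref{sequential} and ultimately on the costly-delegation result (Theorem \ref{del}). Your use of Axiom 6 to make the profile $p\mapsto b_A^{u,v}(p)$ the only payoff-relevant information, given the constant-menu representation, matches the paper's logic, and your necessity sketch is fine. The problem is that the step you dispose of in one clause --- ``Axiom 11 (reducibility) forces $\lambda$ to be a point mass'' --- is precisely the content of the paper's own proof of this theorem, and it is not immediate. One must show that if $\mathrm{supp}(\lambda)$ contains $v_1\neq v_2$, then there exist lotteries $a,b$ with $u(a)\neq u(b)$ such that sets of utilities of strictly positive $\lambda$-measure disagree on the ranking of $a$ and $b$. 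The paper constructs these by setting $\theta=v_1-v_2$, taking an interior lottery $b$ and $a=b+\epsilon\theta$, using Cauchy--Schwarz and the normalization of $\mathcal{V}$ to get $v_1\cdot\theta>0>v_2\cdot\theta$, passing to neighborhoods of positive $\lambda$-mass, and perturbing $\theta$ by a small multiple of $u$ when $u\cdot\theta=0$; only then does writing $U(\{a,b\})$ as a $\lambda$-average of $u(a)$, $u(b)$ and $\max\{u(a),u(b)\}$ yield $\{a\}\succ\{a,b\}\succ\{b\}$, contradicting Axiom 11. Without some such argument your Step 2 asserts, rather than proves, the one claim that separates Theorem \ref{known} from Theorem \ref{uncertain}.

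Your third layer also compresses the real analytic work. To go from monotonicity in profiles to $U(A)=\max_{\tau\in\Gamma}\int b_A^{u,v}(p)\,\tau(dp)$ you need the induced functional to be convex (this comes from Axiom 7, which you never invoke at this stage), Lipschitz/niveloid, and constant-linear. Note that mixing with a non-constant singleton shifts the profile by the affine map $p\mapsto u(f)\cdot p$, not by a constant, so ``affine along mixtures with singletons'' must be converted into constant-linearity via constant equivalents (mixing with $\{x_A\}$); this is how the paper proceeds, observing that Axiom 10 implies $A\sim A\alpha\{x_A\}$ and then invoking the argument of \cite{dDMO17} to collapse the variational cost to the indicator of a closed convex set. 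The existence of constant equivalents, the well-definedness and extension of the functional from interior menus, and the Fenchel--Moreau step are all housed in Theorem \ref{del} and are presupposed, not supplied, by your sketch. Finally, Axiom 8 pins the barycenter $p_0$ only for the particular maximizers it delivers, so concluding $\Gamma\subseteq\Delta_0(\mathcal{P})$ needs an extra step (e.g., replacing $\Gamma$ by the closed convex hull of those maximizers and checking the representation is unchanged), which you should make explicit.
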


Theorem 1 shows that the principal's anticipation of following the Bayesian persuasion model when interacting with an agent with known bias can be revealed by looking at the menu-choice behavior of the principal. In particular, the set of information structures $\Gamma$ within which the principal anticipates choosing an information structure to persuade the agent can be identified by considering his preferences over menus (see Theorem 5).

\paragraph{Persuasion with uncertain bias:}

The following result shows that we should forgo the Stable Choice axiom in order to allow for bias uncertainty from the perspective of the principal.

\begin{thm} \label{uncertain}
A binary relation on menus $\succsim$ is a Bayesian persuasion with
uncertain bias preference, represented by $(u,p_{0},\Gamma,\lambda)$,
if and only if it satisfies Axioms 1-10.
\end{thm}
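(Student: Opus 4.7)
Necessity of Axioms 1--10 can be checked directly: each axiom reduces to a standard consequence of the max-affine structure of $U$ and of the fact that each $b_A^{u,\lambda}(p)$ is affine in the posterior-induced menu $A^p$ and monotone in the relevant taste and information orders. For sufficiency, I would follow a three-stage strategy. In the first stage, restrict $\succsim$ to $\mathbb{A}^c$ and apply Axioms 1--5, 7, and 9 to obtain a random Strotz representation
\begin{equation*}
V(A) = \int_{\mathcal{V}} \Bigl(\max_{a \in m_v(A)} u(a)\Bigr)\,\lambda(dv), \qquad A \in \mathbb{A}^c,
\end{equation*}
where $u \in \mathcal{V}$ represents the commitment ranking and $\lambda \in \Delta(\mathcal{V})$ is a possibly non-degenerate distribution over biases. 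This is essentially the Miao--Ozbek (2018) characterization; the key point is that Axiom 11 is \emph{not} used, so $\lambda$ need not collapse to a Dirac mass.

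In the second stage, for any $A \in \mathbb{A}$, $p \in \mathcal{P}$, and $v \in \mathcal{V}$, I would note that $v(f)\cdot p = v(f^p)$ and $u(f)\cdot p = u(f^p)$, so the map $f \mapsto f^p$ carries $m_{v,p}(A)$ onto $m_v(A^p)$ and
\begin{equation*}
b_A^{u,v}(p) = \max_{a \in m_v(A^p)} u(a), \qquad b_A^{u,\lambda}(p) = V(A^p).
\end{equation*}
The target representation therefore reduces to $U(A) = \max_{\tau \in \Gamma} \int V(A^p)\,\tau(dp)$ for some $\Gamma \subseteq \Delta_0(\mathcal{P})$. In the third stage, I would use Information Dominance (Axiom 6) to show that $U(A)$ depends on $A$ only through the continuous function $\psi_A : p \mapsto V(A^p) = u(x_{A^p})$ on $\mathcal{P}$. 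Axioms 3 and 4 together with Singleton-menu Independence (Axiom 10) force the induced functional $\psi_A \mapsto U(A)$ to be affine along segments of the form $A \alpha \{f\}$, with the singleton term contributing $u(f)\cdot p_0$ via the prior constraint. The Exposure axiom (Axiom 8) delivers, for each $A$, a distribution $\tau_A \in \Delta(\mathcal{P})$ that maximizes $\int \psi_A\,d\tau$, and evaluating at singleton menus forces any optimizing $\tau$ to have barycenter $p_0$. A standard Hahn--Banach/support-function argument in $C(\mathcal{P})$ then produces a closed convex $\Gamma \subseteq \Delta_0(\mathcal{P})$ with $U(A) = \max_{\tau \in \Gamma} \int \psi_A\,d\tau$, which combined with Stage 2 yields the representation.

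The main technical obstacle is the third stage. Extracting $\Gamma$ requires showing that the cone $\{\psi_A : A \in \mathbb{A}\}$ is rich enough to separate elements of $\Delta_0(\mathcal{P})$, and that Axiom 10 is strong enough to rule out the subadditive, cost-like alternative that would otherwise yield the costly-persuasion form of Theorem \ref{costly} instead of the constrained-set form. Relative to the proof of Theorem \ref{known}, the only role Axiom 11 played was collapsing $\lambda$ to a Dirac mass in the first stage; Stages 2 and 3 go through unchanged in its absence, which is exactly what Theorem \ref{uncertain} asserts.
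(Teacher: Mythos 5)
Your overall architecture --- obtain a random Strotz value $V$ on constant menus via Axiom 9, note $b_A^{u,\lambda}(p)=V(A^p)$, and then collapse the posterior layer to a constraint set using Axiom 10 --- is in spirit the route the paper takes (it runs the chain Theorem~\ref{sequential} $\to$ Theorem~\ref{costly} $\to$ Theorem~\ref{uncertain}: first a cost $c_{\mathcal P}$ is obtained, and then Axiom 10 turns it into the indicator of a closed convex $\Gamma$ via the observation that Axiom 10 implies $A\sim A\alpha\{x_A\}$, invoking de Oliveira et al.'s Corollary 1). But your Stage 3, which is exactly where Theorem~\ref{uncertain} differs from Theorem~\ref{costly}, is not actually carried out: you flag it yourself as the main technical obstacle and appeal to a ``standard Hahn--Banach argument in $C(\mathcal P)$'' without identifying where the positive homogeneity that converts a variational (cost) representation into a max-over-a-set representation comes from. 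The missing step is short but essential: applying Axiom 10 with $B=\{x_A\}$ and then $f=x_A$ gives $A\sim A\alpha\{x_A\}$, i.e.\ $I\bigl(\alpha\psi_A+(1-\alpha)I(\psi_A)\bigr)=I(\psi_A)$; combined with vertical invariance of the niveloid $I$ this yields $I(\alpha\psi_A)=\alpha I(\psi_A)$, and only then does Fenchel--Moreau duality force the conjugate cost to be an indicator of a closed convex set of measures, whose barycenters are pinned to $p_0$ by the singleton ranking. Without this, nothing in your outline excludes the strictly more general cost form of Theorem~\ref{costly}, which also satisfies Axioms 1--9.

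Two further inaccuracies in Stage 3. First, Exposure (Axiom 8) does not deliver a $\tau_A$ maximizing $\int\psi_A\,d\tau$ (unconstrained, that would simply be full information); it delivers a $\tau$ at which the gain $u(x_A(\tau))-u(x_A)$ is maximal across menus $B$, i.e.\ a $\tau$ at which $A$ attains the supremum defining the information cost, which is what allows $U(A)$ to be written as an attained value of the variational problem. Second, $\psi_A(p)=V(A^p)$ is in general only upper semicontinuous in $p$ (the Strotz-type function $\phi_A$ is u.s.c., not continuous), so the duality must be run in $B(\Sigma)$ and $ba(\Sigma)$ as in the paper's Theorem~\ref{del} machinery, not in $C(\mathcal P)$. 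Your Stages 1 and 2, and the observation that Axiom 11's only role in Theorem~\ref{known} is to collapse $\lambda$ to a point mass, are consistent with the paper.
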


Theorem 2 shows that when all axioms of Theorem 1 except the Stable Choice axiom (Axiom 11) are observed, then this characteristic of the principal's preferences can be attributed to his uncertainty, measured by $\lambda$, about the agent's realized bias. The principal still acquires an optimal information structure within the set $\Gamma$ to persuade the agent about the payoff relevant state but is uncertain about the agent's bias.  

\paragraph{Costly persuasion:}

We can forgo the singleton independence axiom in order to allow 
costly information acquisition.

\begin{thm} \label{costly}
A binary relation on menus $\succsim$ is a costly Bayesian persuasion
preference, represented by $(u,p_{0},c_{{\cal P}},\lambda)$, if and
only if it satisfies Axioms 1-9.
\end{thm}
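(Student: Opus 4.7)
Necessity of Axioms 1--9 follows from a straightforward verification against the explicit form of $U$. I focus on sufficiency. My overall strategy is threefold: first, extract a random Strotz--type representation on constant menus, identifying both $u$ and $\lambda$; second, use posterior projections $A \mapsto A^{p}$ to compute $U(A^{p}) = b_{A}^{u,\lambda}(p)$ for every menu $A$ and every posterior $p$; and third, dualize to recover the cost function $c_{\mathcal{P}}$.

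\textbf{Step 1 (constant-menu sub-representation).} Axioms 1--4 together with Axiom 5 (Taste Dominance), Axiom 7 (Increasing Desire for Commitment), and Axiom 9 (Constant-menu Independence), restricted to $\mathbb{A}^{c}$, satisfy the hypotheses of a random Strotz representation in the spirit of \cite{MO18} and \cite{kopylov2020revealed}. This yields a commitment utility $u \in \mathcal{V}$, identified from singleton-menu preferences, and a distribution $\lambda \in \Delta(\mathcal{V})$ such that, for every $A \in \mathbb{A}^{c}$,
\[
U(A) = \int_{\mathcal{V}} \Bigl(\max_{a \in m_{v,p_{0}}(A)} u(a)\Bigr) \lambda(dv) = b_{A}^{u,\lambda}(p_{0}).
\]
Since $v(f^{p}) = v(f)\cdot p$ for any act $f$ and belief $p$, applying this representation to the constant menu $A^{p}$ for a general $A \in \mathbb{A}$ yields $U(A^{p}) = b_{A}^{u,\lambda}(p)$ for every $p \in \mathcal{P}$.

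\textbf{Steps 2--3 (posterior lift and duality).} For $\tau \in \Delta_{0}(\mathcal{P})$ and $A \in \mathbb{A}$, set $V(A,\tau) := \int_{\mathcal{P}} b_{A}^{u,\lambda}(p)\, \tau(dp)$; Axiom 4 together with the linearity from Step 1 gives $V(A,\tau) = U(\{x_{A}(\tau)\})$. Information Dominance (Axiom 6) applied to the no-information choice $\tau = \delta_{p_{0}}$ yields $U(A) \geq V(A, \delta_{p_{0}})$, while Increasing Desire for Commitment (Axiom 7) guarantees that any achievable posterior value is reached at the cost of a non-negative penalty. Define the Legendre--Fenchel conjugate
\[
c_{\mathcal{P}}(\tau) := \sup_{A \in \mathbb{A}} \bigl[V(A,\tau) - U(A)\bigr].
\]
Standard arguments then yield: (i) $c_{\mathcal{P}} \in [0,\infty]$ with $c_{\mathcal{P}}(\delta_{p_{0}}) = 0$; (ii) $c_{\mathcal{P}}$ is weak$^{*}$ lower semi-continuous as a supremum of weak$^{*}$-continuous affine maps in $\tau$; and (iii) the biconjugate identity
\[
U(A) = \max_{\tau \in \Delta_{0}(\mathcal{P})}\bigl[V(A,\tau) - c_{\mathcal{P}}(\tau)\bigr]
\]
holds with the maximum attained, via weak$^{*}$ compactness of $\Delta_{0}(\mathcal{P})$ and lower semi-continuity of $c_{\mathcal{P}}$.

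\textbf{Main obstacle.} The crux of the argument lies in Step 1: producing the ``bi-level" Strotz form in which the agent chooses from $m_{v,p}(A)$ while the principal evaluates with $u$. Unlike the classical random Strotz setting in which agent and principal share a single utility, here Taste Dominance, Desire for Commitment, and Constant-menu Independence must be jointly exploited to separate the principal's commitment utility $u$ from the agent's biased utility distribution $\lambda$. The subsequent duality in Steps 2--3 is comparatively routine once Step 1 is secured, and the failure of Axiom 10 (Singleton-menu Independence), relative to Theorem \ref{uncertain}, is precisely what allows $c_{\mathcal{P}}$ to be non-trivial rather than being the indicator of a convex constraint set $\Gamma$.
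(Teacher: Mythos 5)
There is a genuine gap, and it sits exactly where you declare the work ``comparatively routine.'' Having defined $c_{\mathcal{P}}(\tau)=\sup_{A}[V(A,\tau)-U(A)]$, the definition only gives you the inequality $U(A)\ge \sup_{\tau}[V(A,\tau)-c_{\mathcal{P}}(\tau)]$. Weak$^{*}$ compactness of $\Delta_{0}(\mathcal{P})$ and lower semi-continuity of $c_{\mathcal{P}}$ give attainment of that supremum, but they do \emph{not} give the reverse inequality, i.e.\ that for each $A$ there exists $\tau_{A}$ with $V(A,\tau_{A})-U(A)\ge V(B,\tau_{A})-U(B)$ for all $B$. That is precisely the content of Axiom 8 (Exposure), which your sufficiency argument never invokes: part (i) of Exposure delivers the existence of the menu-specific maximizing $\tau$, and part (ii) (neutrality on singletons) is what forces the barycenter of that $\tau$ to equal $p_{0}$, so that the optimum lies in $\Delta_{0}(\mathcal{P})$ rather than merely in $\Delta(\mathcal{P})$ (or, if you tried a genuine Fenchel--Moreau biconjugation, in $ba$). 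Without Exposure or an equivalent niveloid/duality argument establishing convexity, monotonicity and Lipschitz continuity of the functional $\phi_{A}\mapsto U(A)$ \emph{and} the barycenter restriction, the ``biconjugate identity'' simply does not follow; this is the non-routine heart of the paper's construction (it is done in the proof of the sequential representation, Theorem \ref{sequential}, via Theorem \ref{del}). A secondary error: Axiom 6 does not yield $U(A)\ge V(A,\delta_{p_{0}})$, since comparing $A$ with the constant singleton $\{x_{A^{p_{0}}}\}$ would require $A^{p}\succsim\{x_{A^{p_{0}}}\}$ for \emph{all} $p$, which is false in general; groundedness of $c_{\mathcal{P}}$ at $\delta_{p_{0}}$ cannot be obtained this way (nor is it needed for the statement).

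By contrast, your Step 1 is essentially sound and is in fact how the paper proceeds, though in the opposite order of difficulty: the paper first obtains the sequential representation from Axioms 1--8 (this is where Exposure does the heavy lifting for the belief layer), and then uses Axiom 9 together with Proposition 1 of \cite{MO18} to collapse the inner taste-cost maximization at each posterior to a fixed random Strotz distribution $\lambda$, the same across posteriors because all the relations $\succsim_{p}$ agree on constant menus. So the constant-menu random Strotz step you flag as the ``main obstacle'' is available off the shelf, whereas the belief-layer duality you treat as standard is exactly what requires the Exposure axiom. To repair your proof, either route the argument through Theorem \ref{sequential} as the paper does, or explicitly deploy Axiom 8 in your Steps 2--3 to obtain both the attainment of $U(A)=V(A,\tau_{A})-c_{\mathcal{P}}(\tau_{A})$ and the fact that $\tau_{A}\in\Delta_{0}(\mathcal{P})$.
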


Theorem 3 shows that when all axioms of Theorem 2 except the Singleton-menu Independence axiom (Axiom 10) are observed, this characteristic of the principal's preferences can be attributed to his anticipation of costly information acquisition (measured by $c_{\cal{P}}$), instead of constrained information acquisition (measured by $\Gamma$).

\paragraph{Sequential persuasion:}

We can remove the constant-menu independence axiom (Axiom 9) to allow for the sequential costly persuasion.\bigskip{}

\begin{thm} \label{sequential}
A binary relation on menus $\succsim$ is a sequential costly Bayesian persuasion
preference, represented by $(u,p_{0},c_{{\cal P}},c_{{\cal V}})$, if and
only if it satisfies Axioms 1-8.
\end{thm}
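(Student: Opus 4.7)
The necessity direction is a routine verification that the representation implies each axiom. For sufficiency, the plan is to proceed in two main phases, followed by identification of the central technical obstacle.

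\textbf{Phase 1 (constant menus).} I would first apply Axioms 1--3 to singletons to obtain an expected-utility representation of the commitment ranking, yielding $u \in {\cal V}$ and $p_0 \in {\cal P}$ with $\{f\} \succsim \{g\}$ iff $u(f) \cdot p_0 \ge u(g) \cdot p_0$. Restricting $\succsim$ to $\mathbb{A}^c$, Axioms 1--5 together with Axiom 7 match those characterizing the representation of \cite{MO18}, yielding a proper lower semi-continuous cost $c_{\cal V}: \Delta({\cal V}) \to [0, \infty]$ with
$$U(C) \;=\; \max_{\lambda \in \Delta({\cal V})}\bigl[\,b_C^u(\lambda) - c_{\cal V}(\lambda)\,\bigr] \qquad \text{for all } C \in \mathbb{A}^c.$$

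\textbf{Phase 2 (extension to $\mathbb{A}$).} For $A \in \mathbb{A}$ and $p \in {\cal P}$, I define $V(p, A) := U(A^p)$; since $A^p \in \mathbb{A}^c$, Phase 1 gives $V(p, A) = \max_\lambda[b_{A^p}^u(\lambda) - c_{\cal V}(\lambda)]$. Axiom 6 implies that $U(A)$ depends on $A$ only through the profile $p \mapsto V(p, A)$, so there is a non-decreasing functional $\Phi$ on the ambient function space with $U(A) = \Phi(V(\cdot, A))$. Axioms 3, 4, and 7 translate into convexity of $\Phi$ in its argument, using concavity of the Phase 1 representation on each slice $A^p$ combined with monotonicity of $\Phi$ from Axiom 6. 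Axiom 8(i) then supplies, for each $A$, a measure $\tau_A \in \Delta({\cal P})$ that acts as a supergradient of $\Phi$ at $V(\cdot, A)$; Axiom 8(ii) forces $\tau_A \in \Delta_0({\cal P})$ by requiring equality on singletons, whose profiles $p \mapsto u(f) \cdot p$ are affine and hence integrate to $u(f) \cdot p_0$ precisely when $\int p\,\tau_A(dp) = p_0$. Standard Fenchel--Moreau duality then yields
$$U(A) \;=\; \max_{\tau \in \Delta_0({\cal P})}\Bigl[\int_{\cal P} V(p, A)\,\tau(dp) - c_{\cal P}(\tau)\Bigr]$$
with $c_{\cal P}(\tau) := \sup_{A \in \mathbb{A}}\bigl[\int V(p, A)\,\tau(dp) - U(A)\bigr]$, which I would verify to be proper (by Axiom 8) and lower semi-continuous (from the conjugate construction combined with Axiom 2). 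Substituting the Phase 1 formula for $V$ produces the sequential representation in the theorem statement.

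\textbf{Main obstacle.} The hardest step is Phase 2: converting Axioms 3, 6, 7, and 8 into genuine convex-analytic properties of $\Phi$ on an infinite-dimensional space of profiles. Two technical points are delicate. First, establishing convexity of $\Phi$ requires more than pointwise concavity of $A \mapsto V(\cdot, A)$; one must verify that the family $\{V(\cdot, A) : A \in \mathbb{A}\}$ is sufficiently rich (for instance, closed under operations induced by Axioms 3 and 4) so that Axiom 7 controls $\Phi$ beyond singleton profiles. Second, pinning the effective domain of $c_{\cal P}$ to $\Delta_0({\cal P})$ rather than to the full simplex relies essentially on Axiom 8(ii), since only on singleton menus --- where both layers of the representation collapse to the affine map $p \mapsto u(f) \cdot p$ --- can the Bayes-plausibility constraint be enforced from observable behavior alone.
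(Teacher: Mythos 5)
Your decisive step coincides with the paper's: define $c_{\mathcal{P}}(\tau)=\sup_{B\in\mathbb{A}}\left[u(x_B(\tau))-U(B)\right]$ (this is your $\sup_A\left[\int V(p,A)\,\tau(dp)-U(A)\right]$, since $u(x_B(\tau))=\int U(B^p)\,\tau(dp)$ by affinity of $u$), observe that $U(A)\ge u(x_A(\tau))-c_{\mathcal{P}}(\tau)$ holds for every $\tau$ by construction, and use Exposure (i), translated through affinity of $U$ on singleton menus, to produce for each $A$ a $\tau^*$ at which $A$ itself attains the supremum defining $c_{\mathcal{P}}(\tau^*)$, hence equality there; Exposure (ii) applied to two singletons with $u(f)\neq u(g)$ then forces $\int p\,\tau^*(dp)=p_0$. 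This is exactly how the paper argues. Consequently, the obstacle you single out as hardest is spurious: no convexity of $\Phi$ and no Fenchel--Moreau biconjugation is needed, because Exposure hands you the maximizer at every achievable profile directly; the paper never establishes (or uses) convexity of the map from profiles $p\mapsto U(A^p)$ to $U(A)$ at this stage.

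What your sketch glosses over is where the real technical work sits, namely the preliminary structure that makes the translation of Exposure legitimate: (a) a representation $U$ of $\succsim$ on all of $\mathbb{A}$ that is affine on singleton menus; (b) constant equivalents $x_A$ and $x_{A^p}$ for arbitrary, not just constant, menus --- without these $x_A(\tau)$ is not even defined; (c) the SEU commitment ranking $(u,p_0)$, which requires the dominance axioms for state monotonicity and does not follow from Axioms 1--3 alone; and (d) a single cost $c_{\mathcal{V}}$ common to all posteriors, obtained because the induced preferences over constant menus agree across $p$. The paper derives (a)--(c) from Axioms 1--4 and 7 together with the joint dominance condition implied by Axioms 4--6, via the niveloid/variational (costly delegation) representation proved in the appendix, which also supplies the lower semicontinuity and compactness facts needed to turn the supremum over $\Delta_0(\mathcal{P})$ into a maximum. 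With those pieces supplied --- and with a bit more care on necessity of Axiom 8, which requires showing the optimal $\tau$ is attained and verifying the neutrality identity on singletons --- your Phase 2, stripped of the duality language, is the paper's proof.
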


Theorem 4 shows that when all axioms of Theorem 3, except the Constant-menu Independence axiom (Axiom 9), are observed, then this feature of the principal's preferences can be attributed to his anticipation of not only acquiring costly information to persuade the agent, but also managing the agent's bias subject to costs (measured by $c_{\cal{V}}$). We note that the sequential model reduces to the self-regulation model of \cite{MO18} on the set of constant menus. Moreover, the sequential model can be directly related to the costly information acquisition model of \cite{dDMO17}; see section 5 for a detailed discussion.

We close this section with an elicitation of costs result and another result on the comparison of costs by using our characterization of the sequential costly Bayesian persuasion model given in Theorem 4 above.

\paragraph{Elicitation of costs:} The following result shows that we can elicit a pair of unique costs for the sequential model using choice data.

\begin{thm} \label{ident}
Let $\succsim$ be a sequential costly Bayesian persuasion preference represented by $(u,p_{0},c_{{\cal P}},c_{{\cal V}})$.
The function $c^{*}_{{\cal P}}:\Delta_0({\cal P})\to[0,\infty]$ defined for all $\tau\in\Delta_0({\cal P})$ by $c^{*}_{{\cal P}}(\tau)=\sup_{A\in\mathbb{A}}[u(x_A(\tau))-u(x_A)]$ and the function $c^{*}_{\cal{V}}:\Delta(\mathcal{V})\to[0,\infty]$ defined for all $\lambda\in\Delta(\mathcal{V})$ by $c^{*}_{\cal{V}}(\lambda)=\sup_{A\in\mathbb{A}^c}[b_{A}^{u}(\lambda)-u(x_A)]$ are unique \emph{minimal} cost functions such that $(u,p_{0},c^*_{{\cal P}},c^*_{{\cal V}})$ represents $\succsim$. Moreover, both cost functions are monotone, convex, lower semi-continuous, and grounded. 
\end{thm}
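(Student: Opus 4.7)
The plan is to apply Fenchel conjugate duality at both layers of the sequential representation. Since the outer problem $U(A)=\max_{\tau}[u(x_A(\tau))-c_{\cal P}(\tau)]$ pairs $\tau$ linearly against $u(x_A(\tau))=\int U(A^p)\,\tau(dp)$, and the inner problem $U(B)=\max_{\lambda}[b_B^u(\lambda)-c_{\cal V}(\lambda)]$ on $B\in\mathbb{A}^c$ pairs $\lambda$ linearly against $b_B^u(\lambda)$, the functions $c^*_{\cal P}$ and $c^*_{\cal V}$ as defined are exactly the Fenchel conjugates of $U$ restricted to $\mathbb{A}$ and $\mathbb{A}^c$, respectively. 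Without loss of generality we may take the original $c_{\cal P},c_{\cal V}$ to be l.s.c.\ convex; biconjugation then yields $U(B)=\max_{\lambda}[b_B^u(\lambda)-c^*_{\cal V}(\lambda)]$ on constant menus and $U(A)=\max_{\tau}[u(x_A(\tau))-c^*_{\cal P}(\tau)]$ on all menus, which combine to give the sequential representation with $(c^*_{\cal P},c^*_{\cal V})$.

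Next I would settle the remaining properties. Minimality follows directly: any alternative $\tilde c_{\cal V}$ representing $\succsim$ must satisfy $\tilde c_{\cal V}(\lambda)\ge b_B^u(\lambda)-U(B)$ for every $B\in\mathbb{A}^c$ (so that the defining max is achieved), and taking the supremum over $B$ yields $\tilde c_{\cal V}\ge c^*_{\cal V}$; the analogous reasoning gives $\tilde c_{\cal P}\ge c^*_{\cal P}$. Uniqueness is then immediate from the explicit formulas. Convexity and weak${}^*$ lower semi-continuity are automatic since $c^*_{\cal V}$ and $c^*_{\cal P}$ are pointwise suprema of continuous affine functionals. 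For groundedness, plugging any constant menu $A$ into the $c^*_{\cal P}$-representation collapses $\int U(A^p)\,\tau(dp)$ to $U(A)$ and forces $\min c^*_{\cal P}=0$; an analogous step using singleton menus forces $\min c^*_{\cal V}=0$.

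The main obstacle is monotonicity. For $c^*_{\cal V}$, the integrand $v\mapsto\max_{f\in m_v(B)}u(f)$ belongs to the test class characterizing the stochastic conflicts order (cf.~\cite{MO18}), so $\lambda\mapsto b_B^u(\lambda)$ is monotone in that order, and the supremum over $B$ transfers the monotonicity to $c^*_{\cal V}$. For $c^*_{\cal P}$ the analogous step is more delicate because the integrand $p\mapsto U(A^p)$ need not be convex for an arbitrary $A$---indeed Axiom~7 renders $U$ concave under menu-mixture, which is a distinct operation from $p$-expectation. The strategy would be to argue that $c^*_{\cal P}$ admits an equivalent dual characterization as a supremum over continuous convex functions on $\Delta(S)$---equivalently, that the defining supremum is concentrated on menus $A$ whose induced map $p\mapsto U(A^p)$ is convex---after which Blackwell monotonicity follows from the standard characterization of the Blackwell order via integrals of convex functions. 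Establishing this reduction would rely on the minimality of $c^*_{\cal V}$ together with the sequential structure, extending the \cite{dDMO17} argument from their regular-menu-only setting to our richer one.
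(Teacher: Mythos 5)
Your first two paragraphs are broadly sound, but the mechanism is not literally Fenchel biconjugation: the relevant conjugacy is only against the menu-induced test functions $\lambda\mapsto b^{u}_{A}(\lambda)$ and $\tau\mapsto\int u(x_{A^{p}})\,\tau(dp)$, not the full dual pairing, and you cannot take the original $c_{\cal P},c_{\cal V}$ to be convex ``without loss of generality.'' What actually does the work is the sandwich argument you give in your minimality paragraph: $c^{*}_{\cal V}\le\tilde c_{\cal V}$ and $c^{*}_{\cal P}\le\tilde c_{\cal P}$ for any representing pair, and the representation survives replacing the costs by the starred ones (note that for an arbitrary representing pair this step implicitly uses a groundedness normalization, so that the inner value $\max_{\lambda}[b^{u}_{A^{p}}(\lambda)-\tilde c_{\cal V}(\lambda)]$ can be identified with $U(A^{p})$). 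This is a more direct route than the paper's, which goes through the canonical cost $c^{*}(\pi)=\sup_{A}[b^{u}_{A}(\pi)-U(A)]$ of the auxiliary costly delegation representation, uses the separability $c^{*}(\pi)=c^{*}_{\cal P}(\pi_{\cal P})+\int c^{*}_{\cal V}(\pi_{p})\,\pi_{\cal P}(dp)$ together with an optimal $\tau_{A}$ and the induced joint distribution $\pi(A)$ to show that $\hat c_{\cal P}(\tau)=\sup_{B}[U(x_{B}(\tau))-U(B)]$ equals the ${\cal P}$-component of the minimal joint cost, and imports all properties of $c^{*}_{\cal V}$ (minimality, groundedness, convexity, lower semi-continuity, monotonicity in stochastic conflicts) wholesale from the identification results for the self-regulation model on constant menus.

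The genuine gap is Blackwell monotonicity of $c^{*}_{\cal P}$: you explicitly leave it as a ``strategy'' (restricting the defining supremum to menus whose induced map $p\mapsto U(A^{p})$ is convex) and never carry it out, so the theorem is not proved as stated. The paper closes this step in one line: it asserts that for every fixed $A$ the map $p\mapsto u(x_{A^{p}})=U(A^{p})=\max_{\lambda}[b^{u}_{A^{p}}(\lambda)-c^{*}_{\cal V}(\lambda)]$ is convex in $p$, so that $\int U(A^{p})\,\tau(dp)$ weakly increases under any mean-preserving spread for each fixed $A$, and the supremum over $A$ then transfers this monotonicity to $c^{*}_{\cal P}$. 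Your skepticism is aimed at exactly this assertion -- and it is a fair concern, since for fixed $\lambda$ the map $p\mapsto b^{u}_{A^{p}}(\lambda)$ is a random-Strotz value and convexity in $p$ is not a formal consequence of the outer maximization over $\lambda$ alone -- but the proposed alternative (a dual characterization of $c^{*}_{\cal P}$ via convex test functions, leaning on minimality of $c^{*}_{\cal V}$) is only sketched, with no argument for why the supremum concentrates on such menus. As written, your proposal establishes representation, minimality, uniqueness, convexity, lower semi-continuity, and groundedness, but not the monotonicity of $c^{*}_{\cal P}$, which is part of the statement; to complete it you must either justify (or adopt) the pointwise convexity claim the paper relies on, or actually execute the reduction you outline.
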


 We call the unique minimal cost functions $c^{*}_{{\cal P}}$ and  $c^{*}_{\cal{V}}$ identified above the canonical cost functions of the sequential model. To the best of our knowledge, our work is the first within the menu-choice literature that studies a decision making model with two layers of costs: a belief and a taste layer. A similar definition of a canonical cost function $c^*_{{\cal V}}$ over tastes appears in \cite{MO18}; they consider these costs for single-person decision making in the context of costly self-regulation unlike our model of two-person (principal and agent) decision making. The definition of the canonical cost function $c^*_{{\cal P}}$ over beliefs is more novel; although our definition is related to \cite{dDMO17}'s definition of a unique cost function $c^*$, unlike theirs, in our definition the first term within the supremum expression is not restricted to the constant equivalents of best acts (best in terms of the principal's commitment ranking), but rather uses the constant equivalents of induced menus. As such, our canonical cost function over beliefs stems from a richer choice behavior. Having said this, both \cite{dDMO17}'s and our canonical information costs satisfy the Blackwell monotonicity, convexity, lower semi-continuity, and groundedness. However, unlike \cite{dDMO17}, we do not claim that our canonical cost function over beliefs is the unique information cost function satisfying these properties. The main reason for this difference is that while \cite{dDMO17} consider a rich space with an unbounded utility to pin down the unique cost function with the aforementioned properties, we do consider only a bounded utility space.

\paragraph{Comparative statics:} The following result shows that, using menu-choice data, we can compare the canonical costs of two principals  who follow the sequential persuasion model. Let $\succsim_i$ and $\succsim_j$ be two sequential costly Bayesian persuasion preferences represented by $(u^i,p^i_{0},c^i_{{\cal P}},c^i_{{\cal V}})$ and $(u^j,p^j_{0},c^j_{{\cal P}},c^j_{{\cal V}})$, respectively. Suppose that the cost pairs $(c^i_{{\cal P}},c^i_{{\cal V}})$ and $(c^j_{{\cal P}},c^j_{{\cal V}})$ are canonical. 

\begin{thm} \label{comp}
   The following statements hold: (i) $\{x\}\succsim_i A \implies \{x\}\succsim_j A$ for all $A \in \mathbb{A}^c$ if and only if $u^i=u^j$ and $c^j_{{\cal V}}\geq c^i_{{\cal V}}$ ; and (ii) suppose $\succsim^c_i = \succsim^c_j$; then $\{x_A(\tau)\}\succsim_i A \implies \{x_A(\tau)\}\succsim_j A$ for all $A \in \mathbb{A}$ and $\tau\in \Delta(\cal{P})$ if and only if $p^i_{0} = p^j_{0}$ and $c^j_{{\cal P}}\geq c^i_{{\cal P}}$.
\end{thm}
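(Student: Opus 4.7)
The plan is to reduce each direction to establishing the pointwise inequality $U_{j}(A) \leq U_{i}(A)$ on the relevant class of menus, and then to invoke the canonical cost formulas from Theorem \ref{ident} to convert this into the stated cost ordering. Both $\Leftarrow$ directions are routine. For (i), on constant menus the sequential representation collapses to $U(A) = \max_{\lambda}[b_{A}^{u}(\lambda) - c_{\mathcal{V}}(\lambda)]$ (the information-acquisition step is vacuous because $A^{p} = A$ for all $p$), so $c^{j}_{\mathcal{V}} \geq c^{i}_{\mathcal{V}}$ immediately gives $U_{j} \leq U_{i}$ on $\mathbb{A}^{c}$, and hence $u(x) \geq U_{i}(A)$ entails $u(x) \geq U_{j}(A)$. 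For (ii), Theorem \ref{ident} applied to $\succsim^{c}_{i} = \succsim^{c}_{j}$ forces $u^{i} = u^{j}$ and $c^{i}_{\mathcal{V}} = c^{j}_{\mathcal{V}}$, so $U_{i}(A^{p}) = U_{j}(A^{p})$ for every constant induced menu; combining this with $c^{j}_{\mathcal{P}} \geq c^{i}_{\mathcal{P}}$ in the sequential representation yields $U_{j}(A) \leq U_{i}(A)$ for every $A$, closing the implication.

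For the $\Rightarrow$ direction of (i), I first specialize the hypothesis to singletons $A = \{y\}$: it becomes $u^{i}(x) \geq u^{i}(y) \Rightarrow u^{j}(x) \geq u^{j}(y)$ for all lotteries $x,y$, which by the normalization defining $\mathcal{V}$ forces $u^{i} = u^{j} =: u$. For an arbitrary $A \in \mathbb{A}^{c}$, let $x_{A}^{i}$ denote its constant equivalent under $\succsim_{i}$, so $u(x_{A}^{i}) = U_{i}(A)$ and $\{x_{A}^{i}\} \sim_{i} A$; the implication then yields $\{x_{A}^{i}\} \succsim_{j} A$, i.e., $U_{i}(A) = u(x_{A}^{i}) \geq U_{j}(A)$. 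The canonical formula $c^{i}_{\mathcal{V}}(\lambda) = \sup_{A \in \mathbb{A}^{c}}[b_{A}^{u}(\lambda) - U_{i}(A)]$ and its analogue for $j$ then deliver $c^{j}_{\mathcal{V}}(\lambda) \geq c^{i}_{\mathcal{V}}(\lambda)$ for every $\lambda$.

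For the $\Rightarrow$ direction of (ii), the standing hypothesis $\succsim^{c}_{i} = \succsim^{c}_{j}$ again fixes $u^{i} = u^{j} =: u$ and $c^{i}_{\mathcal{V}} = c^{j}_{\mathcal{V}}$, so write $U(A^{p})$ for the common constant-menu value. Specializing the hypothesis to $A = \{f\}$ and $\tau = \delta_{q}$ gives $u(f)\cdot(q - p^{i}_{0}) \geq 0 \Rightarrow u(f)\cdot(q - p^{j}_{0}) \geq 0$; taking $q = p^{i}_{0}$ yields $u(f)\cdot(p^{i}_{0} - p^{j}_{0}) \geq 0$ for every act $f$, and since the vectors $u(f)$ fill the box $[u_{\min}, u_{\max}]^{k}$ with $u_{\min} < 0 < u_{\max}$, this forces $p^{i}_{0} = p^{j}_{0}$. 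For general $A$, set $F(\tau) := u(x_{A}(\tau)) = \int U(A^{p})\,\tau(dp)$; Blackwell monotonicity and groundedness of the canonical $c^{i}_{\mathcal{P}}$ (Theorem \ref{ident}) give $c^{i}_{\mathcal{P}}(\delta_{p_{0}}) = 0$, so $U_{i}(A) \geq U(A^{p_{0}}) \geq \inf_{p}U(A^{p})$ while $U_{i}(A) \leq \sup_{p}U(A^{p})$. Since $F$ is affine and its range is the convex hull of $\{U(A^{p})\}_{p}$, there exists $\tau \in \Delta(\mathcal{P})$ with $F(\tau) = U_{i}(A)$; at this $\tau$, $\{x_{A}(\tau)\} \sim_{i} A$ triggers the implication and yields $U_{i}(A) = F(\tau) \geq U_{j}(A)$. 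The canonical formula $c^{i}_{\mathcal{P}}(\tau) = \sup_{A}[u(x_{A}(\tau)) - U_{i}(A)]$ and its analogue for $j$ then produce $c^{j}_{\mathcal{P}} \geq c^{i}_{\mathcal{P}}$.

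The main obstacle is exhibiting, for each $A$, a test element whose value equals $U_{i}(A)$ exactly, so that the hypothesis of the implication is triggered at equality. In (i) this element is just the constant equivalent $x_{A}^{i}$, which always exists. In (ii) one must instead locate $\tau \in \Delta(\mathcal{P})$ with $u(x_{A}(\tau)) = U_{i}(A)$; this requires $U_{i}(A)$ to lie in the range of $\tau \mapsto \int U(A^{p})\,\tau(dp)$, which ultimately rests on no-information carrying zero cost, itself a consequence of the Blackwell monotonicity and groundedness of the canonical $c^{i}_{\mathcal{P}}$ guaranteed by Theorem \ref{ident}.
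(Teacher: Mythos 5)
Your proof is correct, but it is genuinely more self-contained than the paper's: the paper disposes of Theorem \ref{comp} in one line by citing Theorem 3 of \cite{MO18} (for the taste-cost comparison) and Corollary 4 of \cite{dDMO17} (for the information-cost comparison), whereas you reconstruct both comparisons directly from the canonical variational formulas of Theorem \ref{ident}. Part (i) of your argument is essentially the MO18 argument specialized to constant menus, where the sequential layer is inert because the canonical costs are grounded. Part (ii) is where your route adds real content beyond the citation: dDMO17's comparative statics is stated for the single-person model in which the benefit at each posterior is the value of the best act, while here the relevant benefit is $U(A^{p})$, the value of the whole induced menu, so their result does not apply verbatim; your construction of a test distribution $\tau$ with $u(x_{A}(\tau))=U_{i}(A)$ — using $c^{i}_{\mathcal{P}}(\delta_{p_{0}})=0$ (Blackwell monotonicity plus groundedness of the canonical cost) to anchor the lower end and affinity of $\tau\mapsto\int U(A^{p})\,\tau(dp)$ to hit the value exactly — is precisely the adaptation the paper's one-line proof leaves implicit, and your elicitation of $p_{0}^{i}=p_{0}^{j}$ from singleton menus with Dirac $\tau$'s is the right analogue of their prior identification. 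One small point to tighten: your claim that the range of $F(\tau)=\int U(A^{p})\,\tau(dp)$ is the convex hull of $\{U(A^{p})\}_{p}$ leaves open the edge case $U_{i}(A)=\sup_{p}U(A^{p})$ with the supremum not attained; this is closed immediately by taking the attained maximizer $\tau^{*}$ in the representation (so that $F(\delta_{p_{0}})\le U_{i}(A)\le F(\tau^{*})$) and applying the intermediate value property of the affine map $\alpha\mapsto F(\alpha\tau^{*}+(1-\alpha)\delta_{p_{0}})$, after which your argument goes through unchanged.
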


Theorem 6 shows that by looking at menu-choice data, it is possible to compare the costs of different principals who follow the sequential model. In particular, the costs over tastes can be compared by focusing on the choice data using constant menus, whereas comparison of costs over beliefs requires not only the use of constant menus, but also the general menus which include non-constant acts.

\section{Discussion}

In this section, we discuss other implications of Bayesian persuasion for menu choice, as well as within-menu choice.

\subsection{Finite support uncertainty}

To understand the degree of uncertainty the principal has about the agent's bias, a definition similar to that in \citet[p.958]{DLR09} can be introduced. A non-empty closed set $A^{\ast }\subset \mathrm{co}(A)$ is called \textit{critical} for $A\in \mathbb{A}^{c}$ if $B\sim A^{\ast }$ for all $B$ with $A^{\ast }\subset \mathrm{co}(B)\subset \mathrm{co}(A)$. As in \cite{DLR09}, a critical subset of $A$ contains all lotteries that could be chosen from $A$. We first adopt the finiteness axiom in \cite{Stovall18}.\footnote{See \cite{Kopylov09}
for a related analysis.}

\begin{itemize}
\item \textbf{Axiom 11'.} [Stovall Finiteness] There exists $N\in \mathbb{N}$ such that for every $A$, 
there exists $A^{\ast }$ critical for $A$, where $\left\vert A^{\ast }\right\vert <N$.
\end{itemize}

Using Axiom 11', a special case of Bayesian persuasion with uncertain bias model can be characterized as follows:

\begin{prop} \label{Prop: N} 
A binary relation on menus $\succsim$ is a Bayesian persuasion with
uncertain bias preference, represented by $(u,p_{0},\Gamma,\lambda)$ 
with $\left\vert \mathrm{supp}(\lambda )\right\vert <N$,
if and only if it satisfies Axioms 1-10, and 11'. 
\end{prop}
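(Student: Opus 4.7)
The plan is to apply Theorem~\ref{uncertain} first to extract the baseline representation $(u, p_0, \Gamma, \lambda)$ from Axioms~1--10, and then to derive $|\mathrm{supp}(\lambda)| < N$ as an equivalent reformulation of Axiom~11'. Since the notion of critical set (inherited from \cite{DLR09}) and Axiom~11' itself are formulated exclusively on the subdomain $\mathbb{A}^{c}$ of constant menus, all that has to be controlled is the behavior of $\succsim$ on $\mathbb{A}^{c}$. The key observation is that the representation collapses substantially on this subdomain: for $A \in \mathbb{A}^{c}$ the map $f \mapsto f^{p}$ is the identity, so $m_{v,p}(A) = \arg\max_{a \in A} v(a)$ does not depend on $p$, and $b_{A}^{u,\lambda}(p)$ is constant in $p$. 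Hence, independently of $\Gamma$,
\[
U(A) = \int_{\mathcal{V}} \Bigl[\max_{a \in \arg\max_{a' \in A} v(a')} u(a)\Bigr]\,\lambda(dv),
\]
which is precisely the random Strotz / self-regulation form discussed in \cite{MO18} and \cite{Stovall18}. The proposition therefore reduces to the equivalence between the bound $|\mathrm{supp}(\lambda)| < N$ and Axiom~11' for this specific representation, an equivalence established in \cite{Stovall18} after verifying that the normalization $\mathcal{V} = \{v \in \mathbb{R}^{n} : \sum v_{i} = 0,\, v\cdot v = 1\}$ matches the utility parameterization used there.

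For the forward implication I would construct, for each $A \in \mathbb{A}^{c}$, a critical set $A^{\ast} \subset \mathrm{co}(A)$ by picking, for every $v \in \mathrm{supp}(\lambda)$, one $u$-maximal element of $\arg\max_{a \in \mathrm{co}(A)} v(a)$; then $|A^{\ast}| \leq |\mathrm{supp}(\lambda)| < N$, and any $B$ with $A^{\ast} \subset \mathrm{co}(B) \subset \mathrm{co}(A)$ delivers the same integrand $v$-wise, so $B \sim A^{\ast}$. For the converse I would argue contrapositively: if $|\mathrm{supp}(\lambda)| \geq N$, select distinct $v_{1}, \ldots, v_{N} \in \mathrm{supp}(\lambda)$ and, exploiting the fact that distinct utilities must disagree on the ranking of some pair of lotteries, construct signature lotteries $a_{1}, \ldots, a_{N} \in \Delta(X)$ such that each $a_{i}$ is the unique $v_{i}$-maximizer of $A = \{a_{1}, \ldots, a_{N}\}$; no proper subset of $A$ can then serve as a critical set, contradicting Axiom~11'.

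The main obstacle will be the signature-menu construction in the converse direction: one must ensure that deletion of any $a_{i}$ strictly changes the value of the menu. Because $\lambda$ may mix point masses with a diffuse component, this step uses the fact that the region $\{v \in \mathcal{V} : a_{i} \text{ is the unique } v\text{-maximizer in } A\}$ is open and contains a neighborhood of $v_{i}$, so it carries positive $\lambda$-mass whenever $v_{i} \in \mathrm{supp}(\lambda)$. This argument is executed in detail in \cite{Stovall18} for the random Strotz representation on lottery menus, and can be transported with only notational changes to the present framework.
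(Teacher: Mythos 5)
Your overall architecture coincides with the paper's: apply Theorem \ref{uncertain} to obtain $(u,p_{0},\Gamma,\lambda)$ from Axioms 1--10, observe that on $\mathbb{A}^{c}$ the representation collapses to the random Strotz form (so $\Gamma$ drops out), and reduce the proposition to the equivalence between Axiom 11' and $\left\vert \mathrm{supp}(\lambda)\right\vert<N$. Your forward direction --- selecting, for each $v\in\mathrm{supp}(\lambda)$, a ($u$-maximal) $v$-maximizer and taking the union as the critical set --- is essentially the necessity half of the paper's Lemma \ref{Prop: M}. The divergence, and the gap, is in the converse direction.

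Two problems arise there. First, critical sets are closed subsets of $\mathrm{co}(A)$, not of $A$, so ``no proper subset of $A$ can serve as a critical set'' targets the wrong class of candidates; this is repairable (any $A^{\ast}$ with $\left\vert A^{\ast}\right\vert<N$ must omit some $a_{i}$, and criticality forces $A^{\ast}\cup\{a_{i}\}\sim A^{\ast}$), but the repaired step is exactly where the real difficulty sits. The difference $U(A^{\ast}\cup\{a_{i}\})-U(A^{\ast})$ is an integral of $u$-payoff changes over the set of $v$'s that switch to $a_{i}$, and positive $\lambda$-mass of that switching region does not make the integral nonzero: the $u$-values may coincide or cancel. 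Concretely, with the natural signature construction $a_{i}=\bar{a}+\epsilon v_{i}$, if the chosen support points happen to satisfy $u\cdot v_{1}=\dots=u\cdot v_{N}$, then $u$ is constant on $\mathrm{co}(A)$, every $B\subset\mathrm{co}(A)$ is indifferent, and every singleton is critical for $A$ --- your menu then produces no violation of Axiom 11' at all. The paper (following \citet{DLR09}) avoids this by testing the axiom on a full-support sphere $E$: the critical set $E^{\ast}$ is revealed first, and the witness $a'$ can afterwards be perturbed within the infinite menu $E$ to break any such coincidence. With a finite signature menu fixed in advance, no post hoc perturbation is available against an adversarially chosen $A^{\ast}\subset\mathrm{co}(A)$ (which may also contain mixture points), so the ``deletion strictly changes the value'' step remains unproven; and deferring to \citet{Stovall18} with ``only notational changes'' does not close it, since that argument, like the paper's, relies on the sphere-type richness of the test menu rather than on a finite signature menu.
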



While \cite{Stovall18}'s axiom provides a finite upper-bound for the cardinality of the principal's uncertainty, we can invoke \cite{DLR09}'s weaker finiteness axiom to have finite support, but without a certain upper-bound.

\begin{itemize}
\item \textbf{Axiom 11''.} [DLR Finiteness] For all $A$, there is a finite menu $A^{\ast }$ critical for $A$.
\end{itemize}

If we replace Axiom 11' with Axiom 11'', we obtain the following result:

\begin{prop}
A binary relation on menus $\succsim$ is a Bayesian persuasion with uncertain bias preference, represented by $(u,p_{0},\Gamma,\lambda)$ such that $\lambda$ has finite support, if and only if it satisfies Axioms 1-10, and 11''.
\end{prop}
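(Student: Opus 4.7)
The argument factors cleanly through Theorem~\ref{uncertain}: under Axioms 1--10 we already have a representation $(u,p_{0},\Gamma,\lambda)$ with $\lambda\in\Delta(\mathcal{V})$, so it suffices to show that, conditional on such a representation, Axiom 11'' is equivalent to $\mathrm{supp}(\lambda)$ being finite. This mirrors the role of Axiom 11' in Proposition~\ref{Prop: N}, but without the uniform cardinality bound. Since the definition of critical subset is formulated only for $A\in\mathbb{A}^{c}$, the whole argument lives on constant menus, on which $b^{u,\lambda}_{A}(p)$ is independent of $p$; the information structure is therefore immaterial and the representation reduces to the random Strotz form $U(A)=\int_{\mathcal{V}}\max_{a\in m_{v}(A)}u(a)\,\lambda(dv)$, where $m_{v}(A)=\arg\max_{a\in A}v(a)$.

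For necessity (finite support implies the axiom), enumerate $\mathrm{supp}(\lambda)=\{v_{1},\dots,v_{k}\}$ and, for any $A\in\mathbb{A}^{c}$, set $A^{*}=\{a^{*}_{v_{i}}\}_{i=1}^{k}$, where each $a^{*}_{v_{i}}\in m_{v_{i}}(A)$ is chosen to also maximize $u$ on $m_{v_{i}}(A)$. Then $A^{*}$ is finite. To verify it is critical, take any $B$ with $A^{*}\subset\mathrm{co}(B)\subset\mathrm{co}(A)$: each $a^{*}_{v_{i}}$ remains feasible in $\mathrm{co}(B)$ and, by linearity of $v_{i}$, still maximizes $v_{i}$ on $\mathrm{co}(B)$; hence the random Strotz value of $B$ equals that of $A^{*}$, giving $B\sim A^{*}$ via Axiom~4.

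For sufficiency (the axiom forces finite support), one adapts the DLR09 finiteness program to the random Strotz representation on $\mathbb{A}^{c}$, in the spirit of the single-person analyses in \cite{MO18} and \cite{Stovall18}. The key step---and the main obstacle of the proof---is to argue that if $\mathrm{supp}(\lambda)$ were infinite, then some $A\in\mathbb{A}^{c}$ would admit no finite critical subset, contradicting Axiom 11''. The idea is to select a lottery-rich menu on which distinct utilities in $\mathrm{supp}(\lambda)$ induce distinct agent-optimal choices---for instance a smooth one-dimensional curve inside $\Delta(X)$ along which every $v\in\mathcal{V}$ has a unique $v$-maximizer---so that an infinite support of $\lambda$ generates an infinite collection of lotteries each of which must belong to every critical subset. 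Careful use of the weak$^{*}$-support structure on $\lambda$ and of the dimension of $\mathcal{V}$ is needed to exclude degenerate configurations (e.g.\ $\lambda$ charging tail directions that no single curve separates), but once the separation menu is constructed the finiteness conclusion is immediate.
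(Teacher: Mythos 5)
Your reduction to constant menus (where the representation collapses to a random Strotz form) and your necessity argument are correct and essentially identical to the paper's; indeed your choice of $a^{*}_{v_i}$ as a $u$-maximizer within $m_{v_i}(A)$ handles the tie-breaking in $b^{u,v}_A$ cleanly, which is a nice touch.

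The sufficiency direction, however, has a genuine gap: the part you label ``the main obstacle'' is exactly what is left unproved, and the construction you gesture at would not work as stated. A smooth one-dimensional curve in $\Delta(X)$ cannot in general assign \emph{distinct} maximizers to distinct utilities: $\mathcal{V}$ is an $(n-2)$-dimensional sphere, so for $n>3$ the map from utilities to their maximizers on a curve cannot be injective, and two different support points of $\lambda$ may pool on the same lottery, in which case no contradiction with Axiom 11'' is obtained. The paper instead takes the menu $E$ to be the \emph{surface of a full-dimensional ball} in $\Delta(X)$, so that the map $\kappa$ sending each $a\in E$ to the unique $v\in\mathcal{V}$ maximized at $a$ is a bijection; a finite critical subset $E^{*}$ of $E$ then yields a finite set $\mathcal{V}^{*}=\kappa(E^{*})$ that must contain $\mathrm{supp}(\lambda)$. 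Moreover, your claim that every maximizer of a support utility ``must belong to every critical subset'' is not automatic: if $v'\in\mathrm{supp}(\lambda)\setminus\mathcal{V}^{*}$ with $a'=\kappa^{-1}(v')$, adding $a'$ to $E^{*}$ changes the value $U$ only if $\lambda(\{v: v(a')>\max_{a\in E^{*}}v(a)\})>0$ (which requires an open-neighborhood argument around $v'$) \emph{and} if the resulting gain $\lambda(W)u(a')-\int_{W}b^{u,v}_{E^{*}}\lambda(dv)$ is nonzero; the latter can fail by exact offsetting, and the paper disposes of this case by perturbing $a'$ slightly along $E$ (also taking care of the ties $v(a')=\max_{E^{*}}v$ via the $u$-tie-breaking). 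Without the sphere construction and this perturbation step, the contradiction with Axiom 11'' does not follow, so as written your sufficiency argument is a plan rather than a proof.
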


\subsection{Costly information acquisition}

In applications of Bayesian persuasion, often the principal acquires costly information to persuade an agent with known bias $v$. This costly persuasion model can be axiomatized in our setting by adding Axiom 11 to the premise of Theorem \ref{costly}.

\begin{prop} 
A binary relation on menus $\succsim$ is a costly Bayesian persuasion (with known bias) preference, represented by $(u,p_{0},c_{{\cal P}},v)$, if and only if it satisfies Axioms 1-9 and 11. 
\end{prop}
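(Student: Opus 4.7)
The plan is to derive this proposition from Theorem 3 by showing that the Reducibility axiom (Axiom 11) forces the distribution $\lambda$ over biases to collapse to a point mass $\delta_v$. Since the known-bias model $(u,p_0,c_{{\cal P}},v)$ is exactly the uncertain-bias model with $\lambda=\delta_v$, Axioms 1-9 follow on both directions from Theorem 3 without additional work, so the proof will focus entirely on Axiom 11.

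For the necessity direction, I will fix a two-element constant menu $\{a,b\}$ with $a,b\in F^c$ and observe that the Strotz value $b^{u,v}_{\{a,b\}}(p)$ is independent of $p$ and equals $u(a)$ when $v(a)>v(b)$, $u(b)$ when $v(b)>v(a)$, and $\max\{u(a),u(b)\}$ when $v(a)=v(b)$. Since for any constant menu the infimum $\inf_{\tau}c_{{\cal P}}(\tau)$ appears as a common additive constant in $U(\{a\})$, $U(\{b\})$, and $U(\{a,b\})$, I conclude immediately that $\{a,b\}\sim\{a\}$ or $\{a,b\}\sim\{b\}$.

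For the sufficiency direction, Theorem 3 provides a representation $(u,p_0,c_{{\cal P}},\lambda)$ from Axioms 1-9. Fixing $a,b\in F^c$ with $u(a)>u(b)$ and setting $V_a=\{v\in\mathcal{V}:v(a)>v(b)\}$, $V_b=\{v\in\mathcal{V}:v(b)>v(a)\}$, $V_{=}=\{v\in\mathcal{V}:v(a)=v(b)\}$, a direct computation on the constant menu $\{a,b\}$ yields
\[
U(\{a,b\})-U(\{a\})=-\lambda(V_b)(u(a)-u(b)),\qquad U(\{a,b\})-U(\{b\})=(\lambda(V_a)+\lambda(V_{=}))(u(a)-u(b)).
\]
Axiom 11 then forces one of these differences to vanish, so either $\lambda(V_b)=0$ or $\lambda(V_a\cup V_{=})=0$; in either case all utilities in $\mathrm{supp}(\lambda)$ agree on the weak ranking of $a$ and $b$.

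The hard part will be upgrading this pairwise agreement to full ordinal agreement on $\Delta(X)$, since the constraint only bites when $u(a)\neq u(b)$, yet I must rule out any two distinct utilities in $\mathrm{supp}(\lambda)$. Given $v_1\neq v_2$ in $\mathrm{supp}(\lambda)$, there exist lotteries $a,b\in\Delta(X)$ with $v_1(a)>v_1(b)$ and $v_2(b)>v_2(a)$; because these strict inequalities define an open set and $u$ is non-trivial on $\Delta(X)$, an arbitrarily small perturbation of $a,b$ preserves them while additionally achieving $u(a)\neq u(b)$, contradicting the conclusion of the previous paragraph. Hence $\mathrm{supp}(\lambda)=\{v\}$ is a singleton, $\lambda=\delta_v$, and the desired representation $(u,p_0,c_{{\cal P}},v)$ follows.
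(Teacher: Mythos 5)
Your proposal is correct and follows essentially the same route as the paper: invoke Theorem 3 to get the $(u,p_0,c_{\cal P},\lambda)$ representation, then use Axiom 11 on binary constant menus $\{a,b\}$ with $u(a)\neq u(b)$ to show $\mathrm{supp}(\lambda)$ cannot contain two distinct utilities, so $\lambda=\delta_v$. Your perturbation step guaranteeing a conflicting pair with $u(a)\neq u(b)$ is the same device the paper deploys (via $\theta=v_1-v_2$ and a small shift in the direction of $u$) in the proof of Theorem 1, and your explicit check of the necessity of Axiom 11 is a detail the paper leaves implicit.
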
 

A further specification of this model can be given whenever there is no bias between the principal and the agent; that is, when $u=v$. An implication of this specific costly persuasion model can be given in the form of strategic rationality over binary constant menus (see \cite{Kreps79} for a related axiom).

\begin{itemize}
\item \textbf{Axiom $11'''$}. [Strategic Rationality] For all $a,b\in F^c$, $\{a\}\succsim \{b\}$
implies $\{a\}\sim \{a,b\}$.
\end{itemize}

Notice that the Strategic Rationality axiom (Axiom $11'''$) is stronger
than the Reducibility axiom (Axiom 11). In fact, if we replace Axiom
11 with Axiom $11'''$, we can characterize the costly information acquisition
model by appealing to our characterization results.
\begin{corr} \label{cia}
     A binary relation on menus $\succsim$ is a costly Bayesian persuasion (without bias) preference, represented by $(u,p_{0},c_{{\cal P}},u)$,
if and only if it satisfies Axioms 1-9 and $11'''$.
\end{corr}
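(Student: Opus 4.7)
The plan is to derive the corollary from the preceding Proposition (which characterizes the costly Bayesian persuasion with known bias model under Axioms 1-9 and 11) by showing that, inside that family of representations, the Strategic Rationality axiom is exactly what forces the agent's utility $v$ to coincide with the principal's utility $u$.

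\textbf{Necessity.} First I would verify that any preference represented by $(u, p_{0}, c_{{\cal P}}, u)$ satisfies Axioms 1-9, which follows immediately from the necessity direction of the preceding Proposition applied with $v = u$. For Axiom $11'''$, I would fix $a, b \in F^c$ with $\{a\} \succsim \{b\}$, so $u(a) \geq u(b)$, and observe that because the agent's utility equals $u$, the lottery $a$ lies in $m_{u,p}(\{a,b\})$ for every posterior $p$; hence $b^{u,u}_{\{a,b\}}(p) = u(a)$ is constant in $p$. The optimization defining $U(\{a, b\})$ would then collapse to $u(a) - \inf_{\tau} c_{{\cal P}}(\tau)$, which equals $U(\{a\})$, yielding $\{a, b\} \sim \{a\}$.

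\textbf{Sufficiency.} For the converse, I would assume Axioms 1-9 and $11'''$ and note that for any $a, b \in F^c$, completeness gives $\{a\} \succsim \{b\}$ or $\{b\} \succsim \{a\}$, so $11'''$ immediately delivers Axiom 11. The preceding Proposition would then furnish a representation $(u, p_{0}, c_{{\cal P}}, v)$, and the remaining task is to show $v = u$. I would argue by contradiction: if $v \neq u$, then since $u$ and $v$ are both normalized elements of ${\cal V}$ they cannot induce the same expected-utility ranking on $\Delta(X)$, so there exist $a, b \in \Delta(X)$ with $u(a) > u(b)$ and $v(a) \leq v(b)$; a small perturbation of $a$ within $\Delta(X)$ preserves the first inequality while strengthening the second to $v(a) < v(b)$. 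For such a pair the agent's unique $v$-best element of $\{a, b\}$ is $b$ at every posterior, so $b^{u,v}_{\{a,b\}}(p) = u(b) < u(a)$ and hence $U(\{a,b\}) = u(b) - \inf_{\tau} c_{{\cal P}}(\tau) < U(\{a\})$, i.e., $\{a\} \succ \{a,b\}$, contradicting Axiom $11'''$ applied to $\{a\} \succsim \{b\}$. Hence $v = u$.

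\textbf{Anticipated obstacle.} The only non-routine ingredient will be the strict disagreement claim: any two distinct utilities in ${\cal V}$ must strictly disagree on some pair of lotteries in $\Delta(X)$. This follows from the uniqueness of the normalized expected-utility representation (two positively-affine-equivalent utilities in ${\cal V}$ must coincide because the normalizations pin down both scale and location), combined with a continuity argument to upgrade any weak disagreement to a strict one on an open subset of $\Delta(X) \times \Delta(X)$. The remaining arguments reduce to routine computations of $U$ on two-lottery constant menus, where the information channel is silent and the additive cost shift cancels across the comparison.
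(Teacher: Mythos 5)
Your proposal is correct and follows essentially the same route as the paper: note that Strategic Rationality implies Reducibility, invoke Proposition 3 to obtain a representation $(u,p_{0},c_{{\cal P}},v)$, and then use Axiom $11'''$ on binary constant menus together with the normalization of ${\cal V}$ to force $v=u$ (the paper argues this last step directly, you argue it by contraposition via a strict-conflict pair, which is the same idea). Your added verification of necessity and the perturbation/continuity detail for upgrading weak to strict disagreement are routine and consistent with the paper's treatment.
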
 

In the context of a single-decision maker with rational inattention, the above costly persuasion without bias preference was first axiomatized by \cite{dDMO17}. Corollary \ref{cia} provides an alternative characterization of their costly information acquisition model.

The two results above suggest that the costly Bayesian persuasion (with known bias) and costly Bayesian persuasion (without bias) preferences can be separated by the satisfaction or failure of a single axiom, Strategic Rationality. Specifically, given Axioms 1-9 and 11, if $\succsim$ satisfies $\{a\}\sim \{a,b\}$ for all $a,b \in F^c$ with $\{a\} \succ \{b\}$, then we have the costly information acquisition model; whereas, if $\{a\} \succ \{a,b\}$ for some $a,b \in F^c$ with $\{a\} \succ \{b\}$, then we have the costly persuasion model. In other words, while always satisfying a desire for flexibility (i.e. preference for a larger set) indicates single-person decision making (similar to \cite{dDMO17}'s axiomatization, where there is no bias), sometimes satisfying a desire for commitment (i.e. preference for a smaller set) indicates two-person decision making (as in Bayesian persuasion, where there is bias).

\subsection{Fixed information persuasion}

A particular case of the sequential persuasion model realizes when
the principal cannot flexibly acquire information, but rather must
use the same information structure regardless of the menu considered.
This type of sequential persuasion model will imply the following
exposure axiom.

\textbf{Axiom 8'.} [Neutral Exposure] There exists $\tau\in\Delta({\cal P})$
such that for all $A,B\in\mathbb{A}$, $\{x_{A}(\tau)\}\,1/2\,\{x_{B}\}\sim\{x_{A}\}\,1/2\,\{x_{B}(\tau)\}$.

Note that Axiom 8' is stronger than Axiom 8. Indeed, replacing Axiom
8 with Axiom 8' yields the above sequential persuasion model.
\begin{prop}
A binary relation on menus $\succsim$ is a sequential costly Bayesian
persuasion (with fixed information) preference, represented by $(u,p_{0},\tau,c_{{\cal V}})$ for some $\tau \in \Delta_0(\cal P)$,
if and only if it satisfies Axioms 1-7 and 8'.
\end{prop}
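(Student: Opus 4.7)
The plan is to piggyback on Theorem \ref{sequential}: because Axiom 8' is strictly stronger than Axiom 8, the preference will automatically admit a sequential costly Bayesian persuasion representation $(u,p_{0},c_{{\cal P}},c_{{\cal V}})$, and the remaining task is to show that the outer maximization over information structures effectively collapses to the single $\tau$ supplied by Axiom 8'.

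For the ``if'' direction, I would first invoke Theorem \ref{sequential} and note that on constant menus the sequential model reduces to the self-regulation model, so that $\max_{\lambda}[b^{u}_{A^{p}}(\lambda)-c_{{\cal V}}(\lambda)]=u(x_{A^{p}})$ and hence $U(A)=u(x_{A})=\max_{\tau'\in\Delta_{0}({\cal P})}\bigl[u(x_{A}(\tau'))-c_{{\cal P}}(\tau')\bigr]$ for every menu $A$. Applying Axiom 8' to any pair of menus $A,B$ and translating the mixture indifference through linearity of $u$ on singletons yields
\[
u(x_{A}(\tau))-u(x_{A})=u(x_{B}(\tau))-u(x_{B}).
\]
Substituting $B=\{a\}$ for any constant act $a\in F^{c}$, so that $x_{B}=a=x_{B}(\tau)$, forces this common difference to equal zero. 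Hence $U(A)=u(x_{A})=u(x_{A}(\tau))$ for every $A\in\mathbb{A}$.

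I would then verify that $\tau$ actually lies in $\Delta_{0}({\cal P})$ by feeding non-constant singletons $\{f\}$ into the identity $u(x_{\{f\}}(\tau))=u(x_{\{f\}})$: the left-hand side equals $\sum_{s}u(f(s))\int p(s)\,\tau(dp)$ while the right-hand side equals $\sum_{s}p_{0}(s)u(f(s))$; letting $f$ vary pins the marginal of $\tau$ to $p_{0}$. Substituting back yields $U(A)=u(x_{A}(\tau))=\int u(x_{A^{p}})\,\tau(dp)=\int\max_{\lambda}[b^{u}_{A^{p}}(\lambda)-c_{{\cal V}}(\lambda)]\,\tau(dp)$, which is precisely the fixed-information representation, with $c_{{\cal V}}$ inherited from the sequential representation.

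For necessity, Axioms 1--7 are immediate: the fixed-information model is a special case of the sequential model with $c_{{\cal P}}$ equal to the (proper, lower semicontinuous) indicator cost that is $0$ at $\tau$ and $+\infty$ elsewhere, so the ``only if'' direction of Theorem \ref{sequential} delivers them. Axiom 8' follows at once from the identity $U(A)=u(x_{A})=u(x_{A}(\tau))$ produced by the fixed-information formula, read back as the required mixture indifference via linearity of $u$. The main conceptual step I anticipate is the observation that Axiom 8' forces $u(x_{A}(\tau))-u(x_{A})$ to be \emph{constant} in $A$ and that this constant must vanish once constant-act singletons are plugged in for $B$; the verification $\tau\in\Delta_{0}({\cal P})$ through non-constant singletons is a minor technical point.
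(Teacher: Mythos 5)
Your proposal is correct and follows essentially the same route as the paper: invoke Theorem \ref{sequential} (since Axiom 8' implies Axiom 8), then use Axiom 8' to show the single $\tau$ attains the outer maximum for every menu with zero residual cost, and read Axiom 8' back off the resulting identity $U(A)=u(x_A(\tau))$ for necessity. The only difference is expository: where the paper appeals to groundedness of $c_{\cal P}$ to conclude $c_{\cal P}(\tau)=0$, you obtain the same conclusion directly by plugging constant singletons into the Axiom 8' indifference, and you spell out the verification that $\tau\in\Delta_0({\cal P})$, which the paper leaves implicit from the proof of Theorem \ref{sequential}.
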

In a sequential persuasion with fixed information model, the principal
first acquires a fixed information structure regardless of the menu
considered, then conditional on the realized posterior, he tries to
manage the bias of the agent subject to costs. A more specific
case of the fixed information sequential persuasion model will be
when the principal cannot acquire any information at all. This specific
model will imply the following exposure axiom.

\textbf{Axiom 8''.} [Strong Neutral Exposure] There exists $p\in{\cal P}$
such that for all $A,B\in\mathbb{A}$, $\{x_{A}(\delta_{p})\}\,1/2\,\{x_{B}\}\sim\{x_{A}\}\,1/2\,\{x_{B}(\delta_{p})\}$.

Replacing Axiom 8 with Axiom 8'' yields the sequential persuasion
model without information acquisition.
\begin{corr} \label{noinfo}
A binary relation on menus $\succsim$ is a sequential costly Bayesian
persuasion (without information) preference, represented by $(u,p_{0},\delta_{p_{0}},c_{{\cal V}})$,
if and only if it satisfies Axioms 1-7, and 8''.
\end{corr}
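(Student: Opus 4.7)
I would first establish necessity using the preceding Proposition on fixed information persuasion. The representation $(u,p_0,\delta_{p_0},c_{{\cal V}})$ is the special case $\tau=\delta_{p_0}$ of the fixed information representation, so Axioms 1--7 follow immediately. For Axiom 8'' I would take $p=p_0$ and observe that the integrand $\max_\lambda[b_{A^q}^{u}(\lambda)-c_{{\cal V}}(\lambda)]$ depends on $A$ only through the constant menu $A^q$; with $q=p_0$ this yields $U(A)=U(A^{p_0})$, hence $u(x_A)=u(x_{A^{p_0}})=u(x_A(\delta_{p_0}))$. Both sides of the mixture indifference in Axiom 8'' then collapse to $\tfrac{1}{2}u(x_A)+\tfrac{1}{2}u(x_B)$.

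For sufficiency, the plan is to note that Axiom 8'' is literally Axiom 8' with the distribution chosen to be $\delta_p$. The preceding Proposition therefore yields a fixed information representation $(u,p_0,\tau_0,c_{{\cal V}})$ for some $\tau_0\in\Delta_0({\cal P})$, with $c_{{\cal V}}$ lower semi-continuous, grounded, and convex. Two steps remain: identify the singleton support $p$ appearing in Axiom 8'' with the prior $p_0$, and then rewrite the representation using $\delta_{p_0}$ in place of $\tau_0$.

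For the first step, I would specialize Axiom 8'' to a singleton menu $A=\{f\}$ and a constant singleton $B=\{b\}$. Since $b^p=b$, the right-hand side of the indifference equals $\tfrac12 u(f)\cdot p_0+\tfrac12 u(b)$, whereas the left-hand side equals $\tfrac12 u(f^p)+\tfrac12 u(b)=\tfrac12 u(f)\cdot p+\tfrac12 u(b)$, using that $f^p$ is the constant act with value $\sum_s p(s)f(s)$ and that $u$ evaluated on a constant act equals $u$ evaluated on the lottery it represents. Comparing, $u(f)\cdot(p-p_0)=0$ for all $f\in F$; varying each coordinate $f(s)$ independently in $\Delta(X)$ and using that $u$ is non-constant forces $p=p_0$. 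I expect this identification to be the most delicate part of the argument, since one must keep careful track of the different roles $u$ plays on acts, constant acts, and lotteries.

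Once $p=p_0$ is in hand, applying Axiom 8'' to arbitrary $A$ and a constant singleton $B=\{b\}$ yields $u(x_A(\delta_{p_0}))=u(x_A)$, i.e., $U(A^{p_0})=U(A)$ for every menu $A$. Evaluating the fixed information representation at the constant menu $A^{p_0}$, and using that $(A^{p_0})^q=A^{p_0}$ for every $q$ so the integrand over $\tau_0$ is constant in $q$, gives
\[
U(A)\;=\;U(A^{p_0})\;=\;\max_{\lambda\in\Delta({\cal V})}\bigl[\,b^{u}_{A^{p_0}}(\lambda)\,-\,c_{{\cal V}}(\lambda)\,\bigr],
\]
which is precisely the representation by $(u,p_0,\delta_{p_0},c_{{\cal V}})$, completing the argument.
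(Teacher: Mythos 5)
Your proof is correct and follows essentially the same route as the paper: reduce to the fixed-information proposition (since Axiom 8'' is Axiom 8' with $\tau=\delta_p$) and then pin the point mass to the prior, which the paper does in two lines by noting $\delta_p\in\Delta_0(\mathcal{P})$ forces $p=p_0$. Your explicit singleton computation identifying $p=p_0$ and the evaluation of the representation at the constant menu $A^{p_0}$ simply fill in the details the paper leaves implicit.
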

We observe that the above no-information sequential persuasion model
(characterized in Corollary \ref{noinfo}) extends the self-regulation model
of \cite{MO18} over menus of lotteries to our setting of menus of acts.

\subsection{Value of information}

In Bayesian persuasion, the principal evaluates information instrumentally: sending an informative signal is valuable only when it induces the agent to take actions the principal prefers. With a taste wedge, a more informative signal does not automatically translate into more value, because revealing the state more accurately can move the agent's action away from what the principal wants. From this perspective, 
``value of information'' as measured by the principal's gain from committing to an informative signal relative to committing to no signal at all can be negative. In other words, if an informative signal would (on average) lead the agent to take an action that is worse for the principal than the agent's default action under the prior, the principal rationally chooses an uninformative signal. The following simple example illustrates this point.

Suppose that there are two states, $S=\{s_{1},s_{2}\}$ with prior belief $p_{0}=(1/2,1/2)$. There are three outcomes $X=\{x,y,z\}$. Let $u$ and $v$ represent the utility of the principal and agent on the outcomes, respectively. Suppose that $u(x)=1/\sqrt{2}$, $u(y)=-1/\sqrt{2}$, and $u(z)=0$; and $v(x)=0$, $v(y)=1/\sqrt{2}$, and $v(z)=-1/\sqrt{2}$. Thus, while the agent likes $y$ a lot, the principal dislikes $y$ and prefers $x$. Let $f=(x,x)$ and $g=(y,z)$ be two acts, and let $A=\{f,g\}$ be the menu available for the agent to pick an act from. Now, if the principal does not send a signal to the agent, then the agent chooses $f$ from A using her prior belief $p_{0}$; in this case, the principal obtains a payoff $U^{ni}(A)=1/\sqrt{2}$. But if the principal sends a perfectly informative signal, then the agent picks $g$ with probability $1/2$ that yields a payoff of $-1/\sqrt{2}$ to the principal, and picks $f$ with probability $1/2$ that yields a payoff of $1/\sqrt{2}$  to the principal; thus, in this case, the principal obtains an expected payoff $U^{fi}(A)=0$. Therefore, the value of information for the principal is negative $U^{fi}(A)-U^{ni}(A)=0-1/\sqrt{2}=-1/\sqrt{2}<0$. 

\subsection{Within menu choice behavior}

Bayesian persuasion models naturally yield choice data within menus. In the case of persuasion with fixed bias (known or uncertain to the principal), these choices will be deterministic conditional on the posterior; if the bias is endogenously realized (as in our sequential persuasion model), then they will be stochastic even when it is conditional on the posterior. In the Bayesian persuasion literature, only the models with known (and therefore fixed) bias have been considered so far. There are related works which axiomatize the Bayesian persuasion with known bias model by using (in part or in full) deterministic within menu choice data (see, e.g., \cite{Jakobsen21,Jakobsen24} or \cite{Mensch2025}). We now argue that some of the axioms considered for deterministic choice data can be violated if the principal can endogenously manage the agent's bias.

\paragraph*{Weak axiom of revealed preference:}

One of the well-known implications of Bayesian persuasion considered in the literature is the usual weak axiom of the revealed preference (WARP) condition. In our setting, WARP requires that if $f,g\in A\cap B$ and $f$ is chosen from $A$ and $g$ is chosen from $B$, then $f$ must be chosen from $B$. The self-regulation model of \cite{MO18} and our sequential model coincide over the set of constant menus $\mathbb{A}^{c}$. An obvious adaptation of Example 1 in \cite{MO18} can illustrate that the sequential model violates WARP. In particular, it is possible to have a choice situation where, for instance, only $f$ is chosen in $\{f,g,h\}$ and only $g$ is chosen in $\{f,g\}$ with $f,g,h\in F^{c}$. This is a violation of WARP once we observe that $A=\{f,g,h\}$ and $B=\{f,g\}$.

\paragraph*{Independence:}

Another prominent implication of the Bayesian persuasion considered in the literature is the Independence (IND) condition. In our setting, IND requires that if $f$ is chosen in $A\subset\mathbb{A}^{c}$, 
then $\alpha f+(1-\alpha) h$ will be chosen in $A\alpha\{h\}$ for any $\alpha\in(0,1)$ and $h\in F^{c}$. An obvious adaptation of Example 2 given in \cite{MO18} can show that IND is violated by the sequential model. In particular, it is possible to have the choice situation where, for instance, only $f$ is chosen in $\{f,g\}$, but only $\frac{1}{4}g+\frac{3}{4}h$ is chosen in $\{f,g\}\frac{1}{4}\{h\}$. This is a violation of the IND once we observe that $A=\{f,g\}$ and $\alpha=\frac{1}{4}$.

\section{Conclusion}

This paper has developed an axiomatic foundation for Bayesian persuasion based on observable menu-choice data by the principal. 
By introducing and analyzing monotonicity axioms that separately discipline the roles of tastes and information, we have shown that a broad class of Bayesian persuasion models share a common behavioral core. Within this framework, the principal's possible uncertainty about the agent's bias, and his ability to manage that bias after learning payoff-relevant information about the state, are explicitly incorporated. Our characterization theorems demonstrate that the models we study all satisfy these axioms and thus can be understood as different instantiations of a unified menu-preference approach to Bayesian persuasion. In this sense, we move beyond the dichotomy between ``negative'' and ``positive'' states and explain how a single decision-maker may rationally demand both commitment and flexibility, depending on the interaction between uncertainty about the state and concern about the agent's bias.

Building on this axiomatic structure, we propose a constructive elicitation method that recovers the principal's subjective costs of acquiring information and managing bias solely from menu-choice data. In the sequential costly Bayesian persuasion model, these costs are identified through the exposure axiom, which ensures that for every menu there exists an information structure under which the relative value of that information is maximized with that menu. Our method extends existing approaches by evaluating the full induced menu rather than just the best act at each posterior, thereby accommodating strategic interaction between a principal and an agent with potentially misaligned tastes. This delivers an elicitation procedure that is robust to endogenous bias between principal and agent, and nests as special cases earlier identification results on information and self-regulation costs. Finally, we situate our analysis alongside recent work that axiomatizes Bayesian persuasion using stochastic choice and richer observables, viewing our contribution as complementary: we offer a new sequential persuasion model and a menu-based elicitation technique that can, in future work, be mapped into these stochastic choice frameworks to derive testable ex-post implications and guide empirical applications of persuasion with costly information acquisition and bias management.

\appendix

\section{Appendix}

In this section, we first provide the axiomatic characterization of
a general model of delegation, where the principal not only controls
the posterior belief, but also manages the uncertain bias of the agent. We then provide proofs of the results stated in the body of the paper.

\subsection{\label{subsec:Preliminaries}Preliminaries}

Let $\Sigma$ denote the Borel sigma-algebra over ${\cal P \times \cal V}$, and
let $B(\Sigma)$ be the set of bounded $\Sigma$-measurable functions
mapping $\cal P \times\mathcal{V}$ to $\mathbb{R}$. When endowed with the sup-norm
metric, $B(\Sigma)$ is a Banach space. The topological dual of $B(\Sigma)$
is the space $ba(\Sigma)$ of all bounded and finitely-additive set
functions $\mu:\Sigma\to\mathbb{R}$, the duality being $\left\langle \varphi,\mu\right\rangle =\int_{\cal P \times \mathcal{V}}\varphi(p,v)\,\mu(dp\times dv)$
for all $\varphi\in B(\Sigma)$ and all $\mu\in ba(\Sigma)$ (see,
e.g., \citet[p. 258]{DS58}). For $\varphi,\psi\in B(\Sigma)$,
we write $\varphi\ge\psi$ if $\varphi(p,v)\ge\psi(p,v)$ for all $p \in \cal P$ and $v\in\mathcal{V}$. 

Let $\Phi$ be a non-empty subset of $B(\Sigma)$, and $\Phi_{c}$
be the constant functions in $\Phi$. Set $\Phi$ is called a \emph{tube}
if $\Phi=\Phi+\mathbb{R}$. A functional $I:\Phi\to\mathbb{R}$ is\emph{
(i) normalized} if $I(k)=k$ for all $k\in\Phi_{c}$, \emph{ (ii) monotone} if $\varphi\ge\psi$ implies $I(\varphi)\ge I(\psi)$
for all $\varphi,\psi\in\Phi$,\emph{ (iii) translation invariant}
if $I(\alpha\varphi+(1-\alpha)k)=I(\alpha\varphi)+(1-\alpha)k$ for
all $\varphi\in\Phi$, $k\in\Phi_{c}$, and $\alpha\in[0,1]$, such
that $\alpha\varphi,\alpha\varphi+(1-\alpha)k\in\Phi$,\emph{ (iv)
vertically invariant} if $I(\varphi+k)=I(\varphi)+k$ for all $\varphi\in\Phi$
and $k\in\Phi_{c}$ such that $\varphi+k\in\Phi$, and a \emph{(v)}
\emph{niveloid} if $I(\varphi)-I(\psi)\le\underset{(p,v)\in \cal P \times \mathcal{V}}{\sup}(\varphi(p,v)-\psi(p,v))$ for all $\varphi,\psi\in\Phi$. Clearly, a niveloid is Lipschitz continuous. Moreover, \citet{CMMR14} show that a niveloid is monotone vertically invariant functional, while the converse is true whenever its domain is a tube. Denote by $\bar{a}=(1/n,...,1/n)$ the uniform distribution over $X$. Let $\mathcal{P}^\circ$ denote the interior of $\cal P$ (i.e., the set of lotteries with full support) and $F^\circ$ be the set of acts whose range is $\mathcal{P}^\circ$. Let $\mathbb{A}^{o}\subset\mathbb{A}$ be the collection of closed non-empty subsets of $F^\circ$.

For $u\in{\cal V}$ and $A\in\mathbb{A}$, define $\phi_{A}^{u}:\cal P \times \mathcal{V}\to\mathbb{R}$
by $\phi_{A}^{u}(p,v)=\max_{f\in m_{v,p}(A)}u(f).p$ for all $p \in \cal P$ and $v\in\mathcal{V}.$
When $u$ is clear from the context, we omit the superscript $u$.
By the Maximum Theorem (see, e.g., \citet[pp. 569--570]{AB06}),
$\phi_{A}$ is an upper semi-continuous function taking values in $K=[u_{*},u^{*}]$,
where $u_{*}=\underset{x\in X}{\min}\thinspace u(x)$ and $u^{*}=\underset{x\in X}{\max}\thinspace u(x)$.
The upper semi-continuous functions are $\Sigma$-measurable (\citet[pp. 184--186]{Bill95}).
As a result, $\phi_{A}\in B(\Sigma,K)$, where $B(\Sigma,K)$ denotes
the functions in $B(\Sigma)$ assuming values in $K$. Let $\Phi=\{\phi_{A}\,:\,A\in\mathbb{A}\}$
and $\Phi^{o}=\{\phi_{A}\,:\,A\in\mathbb{A}^{o}\}$. Clearly $0\in\Phi^{o}$
and $\Phi^{o}\subseteq\Phi$. Moreover, since $\phi_{A\alpha B}=\alpha\phi_{A}+(1-\alpha)\phi_{B}$
for any $A,B\in\mathbb{A}$ and $\alpha\in[0,1]$, both $\Phi$ and
$\Phi^{o}$ are convex sets. It is straightforward to show that $\phi_{A}=\phi_{co(A)}$
for all $A\in \mathbb{A}$. 

\subsection{Costly delegation model}

We take the perspective that delegation is a way of affecting
the agent's information, as well as bias by inducing a random realization
of beliefs and tastes for the agent. This reflects the idea that while
the principal's tastes are stable, delegation can give rise to compositions
of random tastes and beliefs for the agent.

Formally, let $\Delta({\cal P}\times\mathcal{V})$ be the set of all
probability distributions on ${\cal P}\times\mathcal{V}$, with typical elements $\pi,\rho,\mu\in\Delta({\cal P}\times\mathcal{V})$
called \emph{distributions} (over posteriors and utilities). We endow $\Delta({\cal P}\times\mathcal{V})$ with the weak*-topology. The principal
would like to use a delegation strategy, so as to have an agent who
can make informed and aligned choices. Given a menu $A$, let $b_{A}^{u}(\pi)$
denote the \emph{benefit of delegation} the principal (with utility
$u$) anticipates when his delegation strategy induces a distribution
$\pi$ over beliefs and tastes.

Formally, let $b_{A}^{u}:\Delta({\cal P}\times{\cal V})\to\mathbb{R}$
be the function s.t. for any $\pi\in\Delta({\cal P}\times{\cal V})$,
\[
b_{A}^{u}(\pi)=\int_{{\cal {\cal P}\times V}}\,[\,\max_{f\in m_{v,p}(A)}u(f)\cdot p\,]\,\pi(dp\times dv).
\]

The principal must incur the \emph{costs of delegation }in order to
exploit the benefits of imposing the agent's beliefs and tastes. As
such, the principal needs to balance the benefits and costs of delegation
when deciding how much effort to exert. These costs are represented
by a function $c:\Delta({\cal P}\times{\cal V})\to[0,\infty]$, where
$c(\pi)$ is a behavioral measure of the effort required to induce
a delegation strategy $\pi$ such that $c(\delta_{p_{0}}\times\hat{\lambda})=0$
for some $p_{0}\in{\cal P}$ and $\hat{\lambda}\in\Delta({\cal V})$
satisfying $c(\delta_{p_{0}}\times\hat{\lambda})\leq c(\pi)$
for all $\pi\in \Delta({\cal P}\times\mathcal{V})$.

\paragraph*{Delegation representation:}

Given $u\in\mathcal{V}$ and $c:\Delta({\cal P}\times\mathcal{V})\to[0,\infty]$, as well as $p_{0}\in{\cal P}$ and $\hat{\lambda}\in\Delta({\cal V})$,  the principal chooses an optimal delegation strategy $\pi^{*}$ by weighting the benefits and costs of distribution to evaluate a menu $A\in\mathbb{A}$. As such, we can define a function
$U:\mathbb{A}\to\mathbb{R}$ on the set of menus that satisfies for all
$A\in\mathbb{A}$, 
\[
U(A)=\max_{\pi\in\Delta({\cal P}\times\mathcal{V})}\left[b_{A}^{u}(\pi)-c(\pi)\right]
\]
which provides a representation for the principal's preferences $\succsim$
over $\mathbb{A}$. In that case, we say that the principal has a\emph{ costly delegation
preference} represented by $(u,p_{0},c,\hat{\lambda})$.

\paragraph*{Dominance:}

Let $A\trianglerighteqslant B$ ($A$ dominates $B$) if  $\mathrm{co}(A)^p_{a}\,1/2\,\mathrm{co}(B)^p_{b}\,\supset\,\mathrm{co}(B)^p_{a}\,1/2\,\mathrm{co}(A)^p_{b}$ for all $p \in \cal{P}$ and for all $a,b\in \Delta(X)$ with $\{a\}\succ\{b\}$, where $C_{d}=\left\{ c\in C:\{c\}\sim\{d\}\right\} $ is the set of lotteries in constant menu $C\in\mathbb{A}^c$ which are indifferent to the lottery $d\in \Delta(X)$. Under Axiom 4, our taste and information dominance axioms, Axiom 5 and Axiom 6, together imply the following dominance axiom.

\begin{itemize}
\item \textbf{Axiom 5'.} [Dominance] If $A\trianglerighteqslant B$, then $A\succsim B$.\bigskip{}
\end{itemize}

Although,  under Axiom 4, Axiom 5' implies Axiom 5, it is not necessarily true that Axiom 5' implies Axiom 6. The following result characterizes the costly delegation preference by invoking the weaker Axiom 5', instead of using Axiom 5 and Axiom 6 together.\bigskip{}

\begin{thm} \label{del}
A binary relation on the set of menus $\succsim$ is a costly delegation preference,
represented by $(u,p_{0},c,\hat{\lambda})$, if and only if it satisfies
Axioms 1-4, 5', and 7.
\end{thm}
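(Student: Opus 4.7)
The plan is to reduce the menu preference to a functional on the space $\Phi$ of benefit functions $\phi_A$ set up in the preliminaries, to show that this functional is a convex niveloid, and then to apply conjugate duality to obtain the cost representation.

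First I would apply Axioms 1--3 restricted to singleton menus to obtain an Anscombe--Aumann representation of the commitment ranking, yielding a unique utility $u\in\mathcal{V}$ and a subjective prior $p_0\in\mathcal{P}$ with $U(\{f\})=\sum_{s}p_0(s)u(f(s))$. Mixture continuity (Axiom 2) then extends $U$ to a continuous functional on all of $\mathbb{A}$ representing $\succsim$, and Axiom 4 reduces the problem to convex menus. Setting $I(\phi_A):=U(A)$ is well-defined on $\Phi$ once I verify that $\phi_A=\phi_B$ implies $A\sim B$: this follows by applying Axiom 5' in both directions, after first relating the set-inclusion form of $\trianglerighteqslant$ to pointwise dominance of $\phi_A$ and $\phi_B$ on $\mathcal{P}\times\mathcal{V}$.

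Second I would establish that $I$ is a convex niveloid. Monotonicity of $I$ is Axiom 5' transported through the same $\trianglerighteqslant$-to-$\phi$ correspondence. Vertical invariance and the normalization $I(k)=k$ on constants follow from Axiom 3, since mixing with a constant singleton $\{c\}$ shifts $\phi_A$ by the constant $u(c)$; continuity of $I$ is inherited from Axiom 2. Thus $I$ is a niveloid on the convex set $\Phi$, which by \cite{CMMR14} extends to a niveloid on the whole Banach space $B(\Sigma)$. Convexity of $I$ comes from Axiom 7: choosing $\{f\}\sim A$ and $\{g\}\sim B$, the axiom yields $U(A\alpha B)\leq\alpha U(A)+(1-\alpha)U(B)$, and since $\phi_{A\alpha B}=\alpha\phi_A+(1-\alpha)\phi_B$ this is exactly convexity of $I$ on $\Phi$.

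Third I would apply Fenchel--Moreau conjugate duality to the convex niveloid $I$, obtaining
\[
I(\phi)=\max_{\mu\in ba(\Sigma)}\bigl[\langle\phi,\mu\rangle-c(\mu)\bigr]
\]
with $c$ the convex conjugate. Monotonicity forces the effective domain of $c$ to lie in the positive cone, normalization forces $\mu(\mathcal{P}\times\mathcal{V})=1$, and lower semi-continuity of $c$ together with weak-$*$ compactness of $\Delta(\mathcal{P}\times\mathcal{V})$ delivers attainment of the supremum on probability measures. After shifting so that $\inf c=0$, the singleton identity $U(\{f\})=u(f)\cdot p_0$ for all $f\in F$ pins down the prior: any minimizer $\pi^{*}$ of $c$ must satisfy $E_{\pi^{*}_{\mathcal{P}}}[p]=p_0$. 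Using the constant-menu reduction (on which $\phi_A$ depends only on $v$, so only the $\mathcal{V}$-marginal of $\pi$ matters), one can then choose $c$ so that it vanishes at some product measure $\delta_{p_0}\times\hat\lambda$, yielding the claimed representation $(u,p_0,c,\hat\lambda)$. The necessity of Axioms 1--4, 5', and 7 from such a representation is routine verification.

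The main obstacle is the bridge in the first two paragraphs between Axiom 5' and pointwise dominance of the $\phi$'s: the axiom is stated through set-inclusion of half-mixtures of indifference classes in induced constant menus $\mathrm{co}(A)^p_a$, and I need to show this is equivalent to $\phi_A\geq\phi_B$ on $\mathcal{P}\times\mathcal{V}$, using the geometric identity $\phi_A(p,v)=\max_{f\in m_{v,p}(A)}u(f)\cdot p$. A secondary subtlety is exhibiting a product-form minimizer $\delta_{p_0}\times\hat\lambda$ of $c$: this is addressed by separately identifying $p_0$ from the singleton Anscombe--Aumann representation and $\hat\lambda$ from the constant-menu reduction, where the product structure of $\pi$ is immaterial to the representation.
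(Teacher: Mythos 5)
Your overall architecture is the paper's: reduce to constant/singleton equivalents, define $I(\phi_A)=u(x_A)$ on $\Phi$, show $I$ is a normalized monotone convex niveloid (monotonicity from Axiom 5', convexity from Axiom 7), and obtain the cost via Fenchel--Moreau duality with a normalization at $\delta_{p_0}\times\hat{\lambda}$. However, two steps would not go through as written. First, Axioms 1--3 restricted to singleton menus do \emph{not} deliver the Anscombe--Aumann pair $(u,p_0)$: Axiom 3 only says that the act mixed in is irrelevant, which is not vNM independence on singletons, and Axioms 1--3 contain no state-wise monotonicity, so nothing pins down a state-independent utility or a prior at that stage. The paper derives mixture-linearity on singletons by combining Axiom 7 with Axiom 3 (from $\{f\}\sim\{g\}$ one gets $\{f\}\sim\{g\}\,1/2\,\{f\}$ and then $\{f\}\,1/2\,\{h\}\sim\{g\}\,1/2\,\{h\}$ for all $h$, so \cite{HM53} applies), and it obtains $p_0$ only after using Axiom 5' to convert state-wise dominance of acts into $\{f\}\trianglerighteqslant\{g\}$ and then invoking \cite{AA63}. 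So your first step already needs Axioms 5' and 7, not just 1--3.

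Second, the ``main obstacle'' you flag---the equivalence between the set-inclusion relation $\trianglerighteqslant$ and pointwise dominance $\phi_A\ge\phi_B$ on $\mathcal{P}\times\mathcal{V}$---is not a peripheral subtlety but the load-bearing step: it is what makes $I$ well defined and monotone, and without it the whole construction is unsupported. The paper resolves it by applying Lemma 1 of \cite{MO18} posterior-by-posterior to the induced constant menus $\mathrm{co}(A)^p$, using both directions of that lemma (sufficiency for necessity of Axiom 5', necessity for monotonicity of $I$); you should either reproduce that argument or cite it explicitly. Relatedly, you invoke singleton equivalents $\{f\}\sim A$ to get convexity from Axiom 7 before establishing that they exist; existence itself uses Axiom 5' (sandwiching $A$ between its best and worst constant acts) together with mixture continuity. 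Finally, your route of extending the niveloid to all of $B(\Sigma)$ via \cite{CMMR14} and dualizing over $ba(\Sigma)$ needs an extra argument to land on countably additive measures in $\Delta(\mathcal{P}\times\mathcal{V})$; the paper instead defines $I$ first on interior menus $\Phi^{o}$, extends to $\Phi$ by uniform limits of the mixtures $A\frac{m-1}{m}\{\bar{a}\}$, and adapts Lemma 27 of \cite{MMR04}, which delivers the maximum over $\Delta(\mathcal{P}\times\mathcal{V})$ directly.
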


\begin{proof}
\emph{[Necessity]:} It is clear that Axioms 1-4, and 7 follow from the costly delegation model. We therefore omit them. Let $A,B\in\mathbb{A}$ be such that $A\trianglerighteqslant B$. By definition, $\mathrm{co}(A)^p\trianglerighteqslant_{\cal{V}} \mathrm{co}(B)^p$ for all $p\in \mathcal{P}$. By Lemma 1 in \cite{MO18}, we have $b^u_{\mathrm{co}(A)^p}\geq b^u_{\mathrm{co}(B)^p}$ for all $p\in \mathcal{P}$. Since 
    \begin{align*}
        \int_{V}[\max_{f^{p}\in\arg\max_{g^{p}\in \mathrm{co}(A)^{p}}v(g^{p})}u(f^{p})]\rho (dv)
        =\int_{V}[\max_{f\in m_{v,p}(A)}\sum u(f(s))p(s)]\rho (dv)
    \end{align*}
for any $\rho\in\Delta(\mathcal{V})$, we have $b^u_A(\pi(\{p\}\times \mathcal{V}))\geq b^u_A(\pi(\{p\}\times \mathcal{V}))$ for all $p\in \mathcal{P}$ and $\pi\in\Delta(\mathcal{P}\times \mathcal{V})$.
This implies that $b^u_{A}(\pi) = \int_{\cal{P}}b^u_{A}(\pi(\{p\}\times \mathcal{V}))dp \geq \int_{\cal{P}}b^u_{B}(\pi(\{p\}\times \mathcal{V}))dp = b^u_{B}(\pi)$ for all $\pi\in\Delta(\mathcal{P}\times \mathcal{V})$. Hence, we obtain $b^u_A\geq b^u_B$, and so $A\succsim B$ by the representation.

\emph{[Sufficiency]:} We first show that Axioms 1-3, 5', and 7 imply that the commitment ranking can be represented by a utility $u\in{\cal V}$ and prior $p_{0}\in{\cal P}$ such that $\{f\}\succsim\{g\} $ if and only if $u(f)\cdot p_{0}\ge u(g)\cdot p_{0}$ for all $f,g\in F$. Let $f,g\in F$ and assume that $\{f\}\sim\{g\}$. By Axiom 7, we have $\{f\}\succsim \{g\}\, 1/2\,\{f\}$, and so, by Axiom 3, $\{g\}\, 1/2\,\{f\}\succsim \{g\}$. Thus, we must have $\{f\}\sim\{g\}\, 1/2\,\{f\}$ implying, by Axiom 3, that $\{f\}\, 1/2\,\{h\}\sim \{g\}\, 1/2\,\{h\}$ for all $h\in F$. Thus, by \cite{HM53}, there exists $u\in\cal V$ such that $\{a\}\succsim\{b\}$  if and only if $u(a)\ge u(b)$ for all $a,b\in F^c$.
Now let $f,g\in F$ such that $\{f(s)\}\succsim \{g(s)\}$ for all $s\in S$, and so
$u(f(s))\geq u(g(s))$ for all $s\in S$. As such, we have $\sum_{s\in S}u(f(s))p(s)\geq \sum_{s\in S}u(g(s))p(s)$ for all $p\in \Delta(S)$.
Since $u$ is affine, we must have $u(f^p)\geq u(g^p)$ for all $p\in \Delta(S)$, and so $\{f^p\}\succsim\{g^p\}$ for all $p\in \Delta(S)$. Thus, clearly we have $\{f^p\}\trianglerighteqslant_{\cal V}\{g^p\}$ for all $p\in \Delta(S)$ which implies that $\{f\}\trianglerighteqslant\{g\}$.
By Axiom 5', we have $\{f\}\succsim\{g\}$, and so by \cite{AA63}, there exists a prior $p_0\in\mathcal{P}$ such that $\{f\}\succsim\{g\}$  if and only if $u(f)\cdot p_{0}\ge u(g)\cdot p_{0}$ for all $f,g\in F$. 

We now show that every $A\in \mathbb{A}$ has a singleton equivalent $x_A\in \Delta(X)$ such that $\{x_A\}\sim A$. Since $S$ is finite, we can take the best lottery $\overline{x}_f$ and the worst lottery $\underline{x}_f$ that may occur in $f$. For any $A\in \mathbb{A}$, define $B_A\equiv \{\overline{x}_f\in \Delta(X):f\in A\}$ and $W_A\equiv \{\underline{x}_f\in \Delta(X):f\in A\}$. Note that $B_A$ and $W_A$ are compact because they are closed subsets of $\Delta(X)$. Since $u$ is continuous on $\Delta(X)$, there exist a best lottery $\overline{x}_A \in B_A$ and a worst lottery $\underline{x}_A \in W_A$ that may occur in $A$. By definition,  we have $\{\overline{x}_{A}^{p}\} \trianglerighteqslant_{\cal{V}} A^p \trianglerighteqslant_{\cal{V}} \{\underline{x}_{A}^{p}\}$ for any $p\in \Delta(X)$. Axiom 5' implies $\{\overline{x}_A\}\succsim A\succsim \{\underline{x}_A\}$. Hence, as in Claim 2 in \cite{dDMO17}, Axiom 2 implies that there exists $\alpha\in [0,1]$ such that $\{\overline{x}_A\}\alpha \{\underline{x}_A\}\sim A$. Let $x_A = \overline{x}_A \alpha \underline{x}_A$.

To establish the desired representation, we show that there is a normalized convex niveloid $I:\Phi\to\mathbb{R}$ such that, for all menus $A$ and $B$, $A\succsim B$ if and only if $I(b^u_{A})\ge I(b^u_{B})$. Following the approach in \citet{MMR06}, an application of Fenchel-Moreau duality then establishes $I(b^u_{A})=\max_{\pi\in\Delta(\mathcal{P} \times \mathcal{V})}\left(\left\langle b^u_{A},\pi\right\rangle -c^{*}(\pi)\right)$ for all $A\in\mathcal{A}$. For technical reasons, we start by defining a functional $I^{o}$ on $\Phi^{o}$, and then use Axiom 2 to extend the functional to $\Phi$. Let $I^\circ:\Phi^\circ \rightarrow \mathbb{R}$ be a functional defined by $I^{\circ}(b_{A}^{u})=u(x_A)$ for all $A\in\mathbb{A}^\circ$. For any $A,B\in\mathbb{A}^\circ$ with singleton equivalents $x_A$ and $x_B$,
$A\succsim B$ if and only if $\{x_A\}\succsim \{x_B\}$. Hence, $I^\circ(b_{A}^{u})\geq I^\circ(b_{B}^{u})$ if and only if $A\succsim B$. We need to show that $I^\circ$ is well-defined; that is, $I^\circ(b^u_A)=I^\circ(b^u_B)$ for any $A,B\in\mathbb{A}^\circ$ with $b^u_A=b^u_B$. The key step is to show that $I$ is monotone; that is, $b_{A}^{u}\ge b_{B}^{u}$ implies $I(b_{A}^{u})\ge I(b_{B}^{u})$. Let $A,B\in \mathbb{A}^\circ$ be such that $b^u_A\geq b^u_B$. Since $b^u_A=b^u_{\mathrm{co}(A)}$ and $b^u_B=b^u_{\mathrm{co}(B)}$, we have $b^u_{\mathrm{co}(A)}\geq b^u_{\mathrm{co}(B)}$. This implies that for $p\in \mathcal{P}$ and $\pi\in\Delta(\mathcal{P}\times \mathcal{V})$, $b^u_{\mathrm{co}(A)}(\pi(\{p\}\times \mathcal{V}))\geq b^u_{\mathrm{co}(B)}(\pi(\{p\}\times \mathcal{V}))$, and so $b^u_{{\mathrm{co}(A)}^p}\geq b^u_{{\mathrm{co}(B)}^p}$ for all $p\in\mathcal{P}$. As such, by the necessity part of Lemma 1 in \cite{MO18}, we have $\mathrm{co}(A)^p\trianglerighteqslant_{\cal{V}} \mathrm{co}(B)^p$ for all $p\in\mathcal{P}$. By Axioms 4 and 5', we have $A \succsim B$, and so $U(A)\geq U(B)$ implying that $I^\circ$ is monotone. Moreover, the obvious adaption of arguments in \cite[pp.39-41]{MO18} show that $I^\circ$ is a normalized convex niveloid.

We now extend $I^{o}$ to $\Phi$. For any menu $A\in\mathbb{A}$ and number $m\in\mathbb{N}$, define $A^{m}=A\frac{m-1}{m}\{\bar{a}\}$ where $\bar{a}=(1/n,...,1/n)$ is the uniform distribution over $X$. Note that for all $A\in\mathbb{A}$ and $m\in\mathbb{N}$, $A^{m}\in\mathbb{A}^{o}$ and $b^u_{{A}^{m}}\to b^u_{A}$ uniformly as $m\to\infty$. Define the a functional $I:\Phi\to\mathbb{R}$ by $I(b^u_{A})=\lim_{m\to\infty}I^{o}(b^u_{{A}^{m}})$ for all $A\in\mathcal{A}$. Since $I^{o}$ is a niveloid, it is a continuous function, and so $I^{o}$ preserves convergence. Thus, for any menu $A\in\mathbb{A}$, the sequence $\{I^{o}(b^u_{{A}^{m}})\}_{m\in\mathbb{N}}$ converges
to a point in $[u_{*},u^{*}]$ showing that $I$ is well-defined. Moreover, following the arguments in \cite[pp.41-42]{MO18} we can immediately establish that (i) $I$ preserves the properties of $I^{o}$, i.e., it is also a normalized convex niveloid which assumes values in $K=[u_{*},u^{*}]$ and (ii) $I$ satisfies the property that for all $A,B\in\mathbb{A}$, $A\succsim B$ if and only if $I(b^u_{A})\ge I(b^u_{B})$. Since $\Phi$ is a convex subset of $B(\Sigma,K)$ and $I$ is a normalized convex niveloid, the obvious adaption of the arguments in the proof of \citet[Lemma 27]{MMR04} then establishes that $I(b^u_A)=\max_{\pi\in\Delta(\mathcal{P} \times \mathcal{V})}\left(\left\langle b^u_A,\pi\right\rangle -c^{*}(\pi)\right)$ for all $\varphi\in\Phi$, where $c^*$ is a non-negative lower semi-continuous and proper cost function such that $c^{*}(\pi)=\sup_{A\in\mathbb{A}}\left(\left\langle b^u_A,\pi\right\rangle -u(x_{A})\right)$ for all $\pi\in\Delta(\cal P \times \mathcal{V})$. Moreover, $c^*$ can be normalized such that  $c^*(\delta_{p_{0}}\times\hat{\lambda})=0$ for the prior belief $p_{0}\in{\cal P}$ and some distribution $\hat{\lambda}\in\Delta({\cal V})$ over utilities.
\end{proof}

\subparagraph*{Identification:}
It is a standard argument to show that for a given costly delegation preference $\succsim$ represented by $(u,p_{0},c,\hat{\lambda})$, utility $u$ is the unique function in $\mathcal{V}$ and $p_{0}$ is the unique prior belief in ${\cal P}$ representing the singleton ranking. However, in general cost function $c$ is not unique; that is, there can be more than one cost function associated with the same preference
relation $\succsim$. But, one can obtain a unique \emph{minimal} cost function $c^*$ (i.e., $c^*(\pi)\leq c(\pi)$ for all $c$ and $\pi$) by using data on singleton equivalent menus.

Let $\succsim$ be a costly delegation preference represented by $(u,p_{0},c,\hat{\lambda})$.

\begin{thm}
The function $c^{*}:\Delta({\cal P}\times\mathcal{V})\to[0,\infty]$,
defined by
\[
c^{*}(\pi)=\sup_{A\in\mathbb{A}}\left(b_{A}^{u}(\pi)-u(f_{A})\right)\quad\forall\pi\in\Delta({\cal P}\times\mathcal{V}),
\]
is the unique \emph{minimal} cost function such that $(u,p_{0},c^{*},\hat{\lambda})$
represents $\succsim$.
\end{thm}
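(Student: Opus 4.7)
The plan is to establish three facts: (i) $c^*$ is a cost function, in the sense that $c^*\ge 0$ and $c^*(\delta_{p_0}\times\hat\lambda)=0$; (ii) the tuple $(u,p_0,c^*,\hat\lambda)$ represents $\succsim$; and (iii) for every cost function $c$ such that $(u,p_0,c,\hat\lambda)$ represents $\succsim$, one has $c\ge c^*$ pointwise. Uniqueness of the minimal cost function then follows immediately from (iii): if $\tilde c$ is another minimal cost representing $\succsim$, (iii) applied to $\tilde c$ gives $\tilde c\ge c^*$, while minimality of $\tilde c$ applied to the representing $c^*$ gives $c^*\ge\tilde c$, whence $c^*=\tilde c$.

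I would tackle (iii) first, since the remaining claims will flow from it. Fix any representing $c$. For every $A\in\mathbb{A}$ and every $\pi\in\Delta(\mathcal{P}\times\mathcal{V})$, the identity
\[u(f_A)=\max_{\pi'}\bigl[b_A^u(\pi')-c(\pi')\bigr]\ge b_A^u(\pi)-c(\pi)\]
rearranges to $c(\pi)\ge b_A^u(\pi)-u(f_A)$; taking the supremum over $A$ gives $c(\pi)\ge c^*(\pi)$, which is (iii). Non-negativity in (i) then follows by restricting the defining supremum of $c^*$ to singleton menus $A=\{a\}$, for which $b_A^u(\pi)=u(a)=u(f_A)$ contributes a zero term. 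Combining (iii) with the existence, granted by Theorem~\ref{del}, of some representing $c$ with $c(\delta_{p_0}\times\hat\lambda)=0$ yields $c^*(\delta_{p_0}\times\hat\lambda)\le 0$, and together with non-negativity this completes (i).

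For (ii), the inequality $\sup_{\pi}[b_A^u(\pi)-c^*(\pi)]\le u(f_A)$ is immediate from the definition of $c^*$, which ensures $b_A^u(\pi)-c^*(\pi)\le u(f_A)$ for every $\pi$. For the reverse direction, I would exploit the attainment already guaranteed by the original representation: pick a maximizer $\pi^*$ with $u(f_A)=b_A^u(\pi^*)-c(\pi^*)$; by (iii), $c^*(\pi^*)\le c(\pi^*)$, so $b_A^u(\pi^*)-c^*(\pi^*)\ge u(f_A)$. Hence the supremum in the candidate representation is attained at $\pi^*$ and equals $u(f_A)$, yielding (ii).

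There is no genuine obstacle here: the argument is a short Fenchel-type dual calculation that reuses inputs already secured by Theorem~\ref{del}, namely the existence of some representing cost and the attainment of its maximizer. The only subtlety worth flagging is in (ii): the representation requires the value to be \emph{attained} and not merely approached, which is why the argument passes through the maximizer $\pi^*$ of the original representation with $c$ rather than invoking a direct compactness argument for $c^*$.
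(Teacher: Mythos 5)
Your proposal is correct and coincides with the argument the paper intends: the paper's own proof simply defers to the identification result (Theorem 2) in \cite{MO18}, which is exactly this standard minimal-cost calculation --- any representing $c$ satisfies $c(\pi)\ge b_{A}^{u}(\pi)-U(A)$ and hence dominates $c^{*}$, groundedness of $c^{*}$ at $\delta_{p_{0}}\times\hat{\lambda}$ follows from the existing normalized representation, and attainment for $c^{*}$ is obtained through the maximizer of the original representation, with uniqueness of the minimal cost then immediate. Your reading of $u(f_{A})$ as the commitment value $U(A)=U(\{f_{A}\})$ is also the paper's (it writes $c^{*}(\pi)=\sup_{A}(b_{A}^{u}(\pi)-U(A))$ in the proof of its Theorem 5), so there is no gap.
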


\begin{proof}
The proof of this result follows from arguments given in the proof of the identification result, Theorem 2, in \cite{MO18}. Therefore, we omit them here.
\end{proof}

We call the above unique minimal cost function $c^{*}$, the \emph{canonical cost} and we call $(u,p_{0},c^{*},\hat{\lambda})$ canonical representation of the costly delegation model. Clearly, it is true from the above formula that $c^{*}(\pi)\geq0$ for all $\pi$; $c^{*}$ is \emph{convex} and \emph{lower semi-continuous} given that it is the supremum of continuous affine functions. Moreover, $c^*$ can be normalized such that $c^*(\delta_{p_{0}}\times\hat{\lambda})=0$
for some $\hat{\lambda}\in\Delta({\cal V})$ such that $c^*(\delta_{p_{0}}\times\hat{\lambda})\leq c^*(\pi)$
for all $\pi\in\Delta({\cal P}\times {\cal V})$.

\subparagraph*{Comparative statics:}

The following defines a comparative behavior that formalizes when
a principal $PR_{j}$ finds commitment more valuable (i.e., flexibility
less valuable) than another principal $PR_{i}$ in terms of menu-choice
data.\medskip{}

\textbf{Definition:} We say $PR_{j}$ has a \emph{stronger desire for commitment}
than $PR_{i}$ if it is true that $\{f\}\succsim_{i}A\;\Rightarrow\;\{f\}\succsim_{j}A$ for all $A\in \mathbb{A}$.

This comparative characterizes when one principal has always higher
costs of delegation than the other. Let $PR_{i}$ and $PR_{j}$ have costly delegation preferences with canonical representations $(u_{i},p_{i},c_{i}^{*},\hat{\lambda}_{i})$
and $(u_{j},p_{j},c_{j}^{*},\hat{\lambda}_{j})$, respectively.

\begin{thm} \label{comp-del}
$PR_{j}$ has a \emph{stronger desire for commitment}
than $PR_{i}$ if and only if $u_{j}=u_{i},\;p_{j}=p_{i}\;,\hat{\lambda}_{j}=\hat{\lambda}_{i},\text{and}\;c_{j}^{*}\ge c_{i}^{*}$.
\end{thm}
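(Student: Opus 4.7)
\textbf{Proof proposal for Theorem~\ref{comp-del}.} The plan is to prove both directions using the canonical-cost representation from the identification theorem immediately preceding. For \textbf{necessity}, assume $u_i=u_j=:u$, $p_i=p_j=:p_0$, $\hat\lambda_i=\hat\lambda_j$, and $c^*_j\ge c^*_i$. Then the singleton values coincide, $U_i(\{f\})=u(f)\cdot p_0=U_j(\{f\})$, while for any menu $A$,
\[
U_j(A)=\max_{\pi}\bigl[b^u_A(\pi)-c^*_j(\pi)\bigr]\le\max_{\pi}\bigl[b^u_A(\pi)-c^*_i(\pi)\bigr]=U_i(A);
\]
hence $\{f\}\succsim_i A$ gives $U_j(\{f\})=U_i(\{f\})\ge U_i(A)\ge U_j(A)$, i.e., $\{f\}\succsim_j A$.

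For \textbf{sufficiency}, suppose $\{f\}\succsim_i A\Rightarrow\{f\}\succsim_j A$ for every $A$, and proceed in four stages. First, specializing the hypothesis to $A=\{g\}$ and invoking completeness of $\succsim_i$ yields $\{f\}\sim_i\{g\}\Rightarrow\{f\}\sim_j\{g\}$ and $\{f\}\succ_i\{g\}\Rightarrow\{f\}\succsim_j\{g\}$, so $\succsim_j$ is coarser than $\succsim_i$ on singletons. Since both are non-trivial expected-utility rankings, coarseness implies that each $j$-level set is a union of $i$-level sets; parallel affine hyperplanes are pairwise disjoint, so every $j$-level set must equal a single $i$-level set, making $u_j\cdot p_j$ an affine monotone function of $u_i\cdot p_i$. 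Restricting to constant acts gives $u_j=\alpha u_i+\beta$ on $X$ for some $\alpha>0$, and the normalizations $\sum_x v(x)=0$ and $v\cdot v=1$ for $v\in\mathcal{V}$ force $\alpha=1$, $\beta=0$, so $u_i=u_j=:u$; comparing expectations across state-dependent acts then gives $p_i=p_j=:p_0$. Second, for every $A\in\mathbb{A}$ let $f^i_A$ be the $i$-singleton equivalent: $\{f^i_A\}\succsim_i A$ implies $\{f^i_A\}\succsim_j A$, so $U_j(A)\le U_j(\{f^i_A\})=U_i(\{f^i_A\})=U_i(A)$, hence $U_i\ge U_j$ on $\mathbb{A}$. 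Third, the canonical-cost formula $c^*_k(\pi)=\sup_A\bigl(b^u_A(\pi)-U_k(A)\bigr)$, $k\in\{i,j\}$, combined with $U_i\ge U_j$, yields $c^*_j\ge c^*_i$ pointwise. Fourth, $c^*_j(\delta_{p_0}\times\hat\lambda_j)=0$ together with $c^*_i\le c^*_j$ forces $c^*_i(\delta_{p_0}\times\hat\lambda_j)=0$, so $\hat\lambda_j$ is also a grounding distribution for $c^*_i$, and we may take $\hat\lambda_i=\hat\lambda_j$.

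The main obstacle is the first stage of sufficiency, namely deducing equality of the normalized parameters $(u,p)$ from a purely one-sided implication on singletons. The key geometric observation is that coarser affine rankings on a rich enough act space cannot genuinely collapse level sets without becoming trivial, because parallel affine hyperplanes are disjoint and a strict union of them is not itself an affine hyperplane; once this is secured, the normalization built into $\mathcal{V}$ pins down the utility and probability parameters exactly, and the remaining three stages are straightforward applications of the canonical-cost duality developed earlier in the Appendix.
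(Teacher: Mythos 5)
Your proposal is correct and is essentially the argument the paper itself relies on: the paper omits the details and defers to the comparative statics result (Theorem 3) in \cite{MO18}, whose standard strategy is exactly yours --- necessity by directly comparing the two variational formulas with common $(u,p_0)$, and sufficiency by using the singleton restriction to pin down $(u,p_0)$ via the uniqueness of the affine commitment ranking, singleton equivalents to obtain $U_i\ge U_j$ on all menus, and the canonical-cost formula $c_k^{*}(\pi)=\sup_A\bigl(b^u_A(\pi)-U_k(A)\bigr)$ to conclude $c_j^{*}\ge c_i^{*}$. The only point worth noting is your fourth stage: what you actually establish is that $\hat\lambda_j$ also grounds $c_i^{*}$, so the two canonical representations \emph{can be chosen} with $\hat\lambda_i=\hat\lambda_j$; since neither the theorem nor the identification argument shows the grounding distribution is unique, this ``may take'' reading is the defensible interpretation of the stated equality $\hat\lambda_j=\hat\lambda_i$, and it matches what the cited argument delivers.
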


\begin{proof}
The proof of this result follows from similar arguments used in the proof of the comparative statics result, Theorem 3, given in \cite{MO18}. Therefore, we omit them here.
\end{proof}

\subsection{Proofs of the results in the text}

We now provide proofs of the results stated in the body of the paper.

\paragraph*{Proof of Theorem \ref{known}:}

The persuasion with known bias preference $\succsim$ represented by the parameters $(u,p_{0},\Gamma,v)$  is a special case of the persuasion with uncertain bias preference represented by the parameters $(u,p_{0},\Gamma,\lambda)$. As such, both preferences satisfy Axioms 1-10. We now argue that Axiom 11 pins down the persuasion with known bias preference within this class. Clearly, when  $\succsim$  is represented by the parameters $(u,p_{0},\Gamma,v)$, then its restriction to the set of constant menus satisfies the Reducibility axiom, Axiom 11.

Now suppose that $\succsim$ satisfies Axiom 11. As such, for all $a,b\in F^{c}$, we have $\{a,b\}\sim\{a\}$ or $\{a,b\}\sim\{b\}$. Suppose for contradiction that there exists $v_{1},v_{2}\in \mathrm{supp}(\lambda)$ such that $v_{1}\neq v_{2}$. Let $\theta=v_{1}-v_{2}$. We have $v_{1}.\theta=1-v_{1}v_{2}>0$ and $v_{2}.\theta=v_{2}.v_{1}-1<0$ since $v_{1}.v_{2}<1$ by the Cauchy-Schwarz inequality. By continuity, there exists open neighborhoods, $N_{1}$ around $v_{1}$ and $N_{2}$ around $v_{2}$ such that $v.\theta>0$ for all $v\in N_{1}$ and $v.\theta<0$ for all $v\in N_{2}$. Since $v_{1},v_{2}$ are in the support, we have $\lambda(N_{1})>0$ and $\lambda(N_{2})>0$.

Pick a lottery $b$ in the interior of $\Delta(X)$; that is,
$b(x)>0$ for all $x\in X$. Let $\epsilon>0$ be small enough such
that $a=b+\epsilon\theta\in\Delta(X)$. Then for $v\in N_{1}$, we
have $v(a)>v(b)$, while for $v\in N_{2}$ we have $v(b)>v(a)$. Moreover,
without loss of generality, we can take $u(a)\neq u(b)$. Otherwise,
if $u.\theta=0$, then we can let $\theta'=\theta+r\,u$ for some
sufficiently small $r\neq0$, and so we will have $u(a)-u(b)=\epsilon[u.\theta+r]=\epsilon r\neq0$. In this case, we also have $v_{1}.\theta'=1-v_{1}v_{2}+r\,v_{1}.u>0$
and $v_{2}.\theta'=v_{2}.v_{1}-1+r\,v_{2}.u<0$ provided that $r$
is sufficiently small.

Therefore, we have both $\lambda_{a}>0$ and $\lambda_{b}>0$ where
$\lambda_{a}=\lambda(v\in{\cal V}:v(a)>v(b))$ and $\lambda_{b}=\lambda(v\in{\cal V}:v(b)>v(a))$.
But then, $U(\{a\})>U(\{a,b\})$ and $U(\{a,b\})>U(\{b\})$ since
$U(\{a,b\})=\lambda_{a}u(a)+\lambda_{b}u(b)+\lambda_{ab}\max\{u(a),u(b)\}$ where $\lambda_{ab}=\lambda(v\in{\cal V}:v(a)=v(b))$.
As such, $\{a\}\succ\{a,b\}\succ\{b\}$ which violates Axiom 11. Thus,
there must be a unique utility $v\in \mathrm{supp}(\lambda)$ showing that
$\succsim$ is a Bayesian persuasion with known bias preference. \qed

\paragraph*{Proof of Theorem \ref{uncertain}:}

The persuasion with uncertain bias preference represented with parameters $(u,p_{0},\Gamma,\lambda)$  is a special case of the costly persuasion preference represented with parameters $(u,p_{0},c_{{\cal P}},\lambda)$. In particular, we have $c_{\cal P}(\tau)=0$ if $\tau \in \Gamma$ and $c_{\cal P}(\tau)=\infty$ if $\tau \notin \Gamma$. We know by Theorem \ref{costly} that Axioms 1-9 are necessary and sufficient for the costly persuasion model. It is clear that the persuasion with uncertain bias model satisfies Axiom 10. But then the converse direction follows from the proof of Corollary 1 given in \cite{dDMO17} after observing that Axiom 10 directly implies Axiom $10^*$: $A \sim A\alpha \{x_A\}$ for all $A\in \mathbb{A}$.\qed

\paragraph*{Proof of Theorem \ref{costly}:}

The costly Bayesian persuasion model  $(u,p_{0},c_{{\cal P}},\lambda)$ is a special case of the sequential costly Bayesian persuasion model  $(u,p_{0},c_{{\cal P}},c_{{\cal V}})$. We know by Theorem \ref{sequential} that Axioms 1-8 are necessary and sufficient for the sequential model. We now show that Axiom 9 is necessary and sufficient to pin down the costly Bayesian persuasion model. We know, as noted in the proof of Theorem \ref{del}, that for each $p\in\Delta(S)$, the binary relation $\succsim_{p}$ restricted to the space of constant menus $\mathbb{A}^c$ has a representation  $V_{p}$ such that $V_{p}(A)=V(A^{p})=\max_{\lambda\in\Delta(\mathcal{V})}[b_{A^{p}}^{u}(\lambda)-c_{{\cal V}}(\lambda)]$ for some $c_{{\cal V}}$. But then by Proposition 1 in \cite{MO18}, the relation $\succsim_{p}$ satisfies Axiom 9 if and only if it has a random Strotz representation such that $V_{p}(A)=b_{A^{p}}^{u}(\lambda)$ for some $\lambda\in\Delta({\cal V})$. \qed

\paragraph*{Proof of Theorem \ref{sequential}:}

\emph{[Necessity]:} We start our proof by showing the necessity of the axioms. Since $\succsim$ is a costly delegation preference, by Theorem \ref{del}, we already know that Axioms 1-4, 7, and 5' (and so 5) are necessary. To see that Axiom 6 holds, let $A,B\in\mathbb{A}$ such that $A\trianglerighteqslant_{\cal{P}} B$. This implies that for any $\tau \in \Delta(\cal{P})$, we have $u(x_{A}(\tau))=\int u(x_{A^{p}})\,\tau(dp) \geq \int u(x_{B^{p}})\,\tau(dp) = u(x_{B}(\tau))$ which implies that
\[
\max_{\tau\in\Delta_0({\cal P})}\left[\int_{{\cal P}}u(x_{A^p})\,\tau(dp)\,-\,c_{{\cal P}}(\tau)\right] \geq \max_{\tau\in\Delta_0({\cal P})}\left[\int_{{\cal P}}u(x_{B^p})\,\tau(dp)\,-\,c_{{\cal P}}(\tau)\right],
\]
and so we have $A \succsim B$.

Now, let $A \in \mathbb{A}$. By the sequential model, $U(A)=\sup_{\tau \in \Delta_0(\cal{P})}[U(x_{A}(\tau))-c_{{\cal P}}(\tau)]$,  and so  $U(x_{A}(\tau^{*}))-U(A)\geq U(x_{B}(\tau^{*}))-U(B)$ for all $B$ since the supremum is achieved at some $\tau^{*}\in\Delta_{0}({\cal P})$ implying that $\frac{1}{2}U(\{x_A(\tau^*)\})+\frac{1}{2}U(\{x_B\})\geq \frac{1}{2}U(\{x_A\})+\frac{1}{2}U(\{x_B(\tau^*)\})$. Since $U$  is linear over singleton menus, we have $U(\frac{1}{2}\{x_A(\tau^*)\}+\frac{1}{2}\{x_B\})\geq U(\frac{1}{2}\{x_A\}+\frac{1}{2}\{x_B(\tau^*)\})$. Since $U$ represents $\succsim$, we have $\frac{1}{2}\{x_A(\tau^*)\}+\frac{1}{2}\{x_B\}\succsim
\frac{1}{2}\{x_A\}+\frac{1}{2}\{x_B(\tau^*)\}$ for all $B$. Now let $C=\{f\}$ and $D=\{g\}$ be two singleton menus. Since $\tau^* \in \Delta_0(\cal P)$ and $U$ is linear over singleton menus, we have $U(x_C(\tau^*))=u(x_C(\delta_{p_0}))$ and $U(x_D(\tau^*))=u(x_D(\delta_{p_0}))$. By the sequential model, we have $U(x_C)=U(C)=u(f).p_0-c(\delta_{p_0})=u(f).p_0=U(x_C(\delta_{p_0}))$ and similarly, $U(x_D)=U(D)=u(g).p_0-c(\delta_{p_0})=u(g).p_0=U(x_D(\delta_{p_0}))$. Since $U$ is linear over singleton menus, we have $U(\{x_C\}1/2\{x_D(\delta_{p_0}\}))=U(\{x_D\}1/2\{x_C(\delta_{p_0}\}))$ showing that Axiom 8 holds.

\emph{[Sufficiency]:} We now show the sufficiency of the axioms. Recall that Axioms 5 and 6 together with Axiom 4 imply Axiom 5'. As such, by Theorem \ref{del}, using axioms 1-7, we know that $\succsim$ is a costly delegation preference represented by some $U$ with parameters $(u,p_{0},c,\hat{\lambda})$. For each $p\in\Delta(S)$, define a binary relation $\succsim_{p}$ over $\mathbb{A}$ such that $A\succsim_{p}B$ if $A^{p}\succsim B^{p}$. The binary relation $\succsim_{p}$ restricted to the space $\mathbb{A}^{c}$ satisfies Axioms 1-6 as stated in \cite{MO18}'s framework, and so $\succsim_{p}$ has a representation over $\mathbb{A}$ such that $V_{p}(A)=V(A^{p})=\max_{\lambda\in\Delta(\mathcal{V})}[b_{A^{p}}^{u}(\lambda)-c_{{\cal V}}(\lambda)]$ for some $c_{{\cal V}}$. Since any two preferences $\succsim_{p}$ and $\succsim_{q}$ agree
on $\mathbb{A}^{c}$, we have not only $u$, but also $c_{{\cal V}}$ the same for all these preferences. Let $c_{{\cal P}}(\tau)=\sup_{B\in\mathbb{A}}[U(x_{B}(\tau))-U(B)]$. We have $U(x_{A}(\tau))-c_{{\cal P}}(\tau)\leq U(A)$ by definition. By Axiom 8, there exists $\tau^{*}\in\Delta(\cal{P})$ with $\frac{1}{2}\{x_A(\tau^*)\}+\frac{1}{2}\{x_B\}\succsim
\frac{1}{2}\{x_A\}+\frac{1}{2}\{x_B(\tau^*)\}$ for all $B$. Since $U$ represents $\succsim$, we have $U(\frac{1}{2}\{x_A(\tau^*)\}+\frac{1}{2}\{x_B\})\geq U(\frac{1}{2}\{x_A\}+\frac{1}{2}\{x_B(\tau^*)\})$. Since $U$  is linear over singleton menus, we have $\frac{1}{2}U(\{x_A(\tau^*)\})+\frac{1}{2}U(\{x_B\})\geq \frac{1}{2}U(\{x_A\})+\frac{1}{2}U(\{x_B(\tau^*)\})$. Rearranging the above terms and multiplying both sides by $2$, we obtain $U(x_{A}(\tau^{*}))-U(A)\geq U(x_{B}(\tau^{*}))-U(B)$ for all $B$. Thus, $U(x_{A}(\tau^{*}))-U(A)\geq c_{{\cal P}}(\tau^{*})$, and so $U(A)=\sup_{\tau\in\Delta(\cal{P})}[U(x_{A}(\tau))-c_{{\cal P}}(\tau)]$.

Now let $C=\{f\}$ and $D=\{g\}$ be two singleton menus. Since $U$ is linear over singleton menus, we have $U(x_C(\tau^*))=u(x_C(\delta_{p_{{\tau^*}}}))$ and $U(x_D(\tau^*))=u(x_D(\delta_{p_{\tau^*}}))$ where $p_{\tau^*}=\int p.\tau^*(dp)$. Given that we have $\{x_C(\tau^*)\}\,1/2\,\{x_D\}\sim
\{x_C\}\,1/2\,\{x_D(\tau^*)\}$, by the linearity of $U$ we obtain $U(\{x_C\})-U(\{x_C(\tau^*)\})=U(\{x_D\})-U(\{x_D(\tau^*)\})$, and so $u(f)\cdot(p_0-p_{\tau^*})=u(g)\cdot(p_0-p_{\tau^*})$. Since $f$ and $g$ are arbitrary acts, and so they can be taken such that $u(f)\neq u(g)$, we must have $p_0=p_{\tau^*}$ showing that $\tau^* \in \Delta_0(\cal P)$. With finite state space $S$, $\Delta({\cal P})$ is compact; since $\Delta_0({\cal P})$ is a closed subset of $\Delta({\cal P})$, it is compact; the function
$U(x_{A}(\tau))$ is continuous, and $c_{{\cal P}}(\tau)$ is l.s.c.  in $\tau$ as established in Theorem \ref{del}. Since $\Delta_0({\cal P})$ is compact and $U(x_{A}(\tau))-c_{{\cal P}}(\tau)$ is u.s.c., the supremum is attained. Using the definition of $x_{A}(\tau)$, we derive the sequential model, 
\[
U(A)=\max_{\tau\in\Delta_0({\cal P})}\left[\int_{{\cal P}}\left(\max_{\lambda\in\Delta(\mathcal{V})}[b_{A^{p}}^{u}(\lambda)-c_{{\cal V}}(\lambda)]\right)\tau(dp)\,-\,c_{{\cal P}}(\tau)\right],
\]
completing the sufficiency part of the proof. \qed

\paragraph*{Proof of Theorem \ref{ident}:}
Since $\succsim$ is a costly delegation preference, it has a canonical
representation $(u,p_{0},c^{*},\hat{\lambda})$ where $c^{*}(\pi)=\sup_{A\in\mathbb{A}}\left(b_{A}^{u}(\pi)-U(A)\right)$. By definition, $c^{*}$ is the unique minimal cost function
representing $\succsim$. The binary relation $\succsim^{c}$ restricted
to the space of constant menus $\mathbb{A}^{c}$ satisfies Axioms
1-6 in \cite{MO18}. As such, $U$ as the representation
of $\succsim$ satisfies $U(A)=\max_{\lambda\in\Delta(\mathcal{V})}[b_{A}^{u}(\lambda)-c_{{\cal V}}^{*}(\lambda)]$
for all $A\in\mathbb{A}^{c}$, where the cost function over tastes
is defined by the variational formula $c_{{\cal V}}^{*}(\lambda)=\sup_{A\in\mathbb{A}^{c}}[b_{A}^{u}(\lambda)-U(\{x_{A}\})]$
for all $\lambda\in\Delta({\cal V})$. By Theorem 2 in \cite{MO18}, we know that $c_{{\cal V}}^{*}$ is the minimal cost function
providing a representation (together with the utility function $u$)
for the binary relation $\succsim^{c}$. Moreover, by Corollary 1
in \cite{MO18}, the minimal cost function $c_{{\cal V}}^{*}$
is grounded, lower semi-continuous (l.s.c.), convex, and monotone (with
respect to the order in stochastic conflicts; see Definition 2 in
\cite{MO18}).

Note that since $\succsim$ is a sequential costly Bayesian persuasion
preference, we have $c^{*}(\pi)=c_{{\cal P}}^{*}(\pi_{{\cal P}})+\int_{{\cal P}}c_{{\cal V}}^{*}(\pi_{p})\pi_{{\cal P}}(dp)$. Now let $\hat{c}_{{\cal P}}(\tau)=\sup_{B\in\mathbb{A}}(U(x_{B}(\tau))-U(B))$
for any $\tau\in\Delta_{0}({\cal P})$. By the proof of Theorem \ref{sequential},
we have $U(A)=\sup_{\tau}[U(x_{A}(\tau))-\hat{c}_{{\cal P}}(\tau)]$ for any $A\in\mathbb{A}$.
Let $\tau_{A}\in\Delta_{0}({\cal P})$ be
such that $U(A)=U(x_{A}(\tau_{A}))-\hat{c}_{{\cal P}}(\tau_{A})$.
Using the definition of $x_{A}(\tau_{A})$ and the fact that $U$
is linear over singleton menus, we have 
\[
U(x_{A}(\tau_{A}))=\int_{{\cal P}}U(x_{A^{p}})\tau_{A}(dp)=\int_{{\cal P}}[\max_{\lambda\in\Delta(\mathcal{V})}[b_{A^{p}}^{u}(\lambda)-c_{{\cal V}}^{*}(\lambda)]]\tau_{A}(dp).
\]
Thus, we derive $\hat{c}_{{\cal P}}(\tau_{A})+\int_{{\cal P}}c_{{\cal V}}^{*}(\lambda_{A^{p}})\tau_{A}(dp)=\int_{{\cal P}}b_{A^{p}}^{u}(\lambda_{A^{p}})\tau_{A}(dp)-U(A)$
where we use the identity $U(x_{A^{p}})=b_{A^{p}}^{u}(\lambda_{A^{p}})-c_{{\cal V}}^{*}(\lambda_{A^{p}})$ for each $p\in{\cal P}$.
Let $\pi(A)\in\Delta_{0}({\cal P}\times{\cal V})$ be a joint distribution
such that its marginal over ${\cal P}$ is equal to $\tau_{A}$, and
its conditional distribution for each $p\in{\cal P}$ be equal to
$\lambda_{A^{p}}$; that is, $\pi(A)_{{\cal P}}=\tau_{A}$ and $\pi(A)_{p}=\lambda_{A^{p}}$
for each $p\in{\cal P}$. Thus, we must have
\begin{align*}
\hat{c}_{{\cal P}}(\tau_{A})+\int_{{\cal P}}c_{{\cal V}}^{*}(\lambda_{A^{p}})\tau_{A}(dp) 
 & =b_{A}^{u}(\pi(A))-U(A)\\
 & \leq\sup_{B\in\mathbb{A}}\left(b_{B}^{u}(\pi(A))-U(B)\right)  =c^{*}(\pi(A))\\
& =c_{{\cal P}}^{*}(\pi(A)_{{\cal P}})+\int_{{\cal P}}c_{{\cal V}}^{*}(\pi(A)_{p})\pi_{{\cal P}}(dp)\\
 & =c_{{\cal P}}^{*}(\tau_{A})+\int_{{\cal P}}c_{{\cal V}}^{*}(\lambda_{A^{p}})\pi_{{\cal P}}(dp).
\end{align*}
Since $c^{*}$ is the minimal cost function representing $\succsim$,
we have $\hat{c}_{{\cal P}}(\tau_{A})=c_{{\cal P}}^{*}(\tau_{A})$.
Since $A\in\mathbb{A}$ is arbitrary, we must have $\hat{c}_{{\cal P}}=c_{{\cal P}}^{*}$, and so $c_{{\cal P}}^{*}(\tau)=\sup_{A\in\mathbb{A}}(U(\{x_{A}(\tau)\})-U(\{x_{A}\}))$
and $c_{{\cal V}}^{*}(\lambda)=\sup_{A\in\mathbb{A}^{c}}[b_{A}^{u}(\lambda)-U(\{x_{A}\})]$
are the unique minimal cost functions representing $\succsim$ together
with $(u,p_{0})$. Note that since we
have $U(\{x_{A}(\tau)\})=U(\{x_{A}\})$ for any constant menu $A\in\mathbb{A}^c$ and for any $\tau$, the cost
function $c_{{\cal P}}^{*}$ is grounded. Moreover, $c_{{\cal P}}^{*}$
is the supremum of affine functions, and so it is l.s.c. and convex.
Finally, we claim that $c_{{\cal P}}^{*}$ is monotone in the Blackwell
order (i.e., the convex order). To see this, note that for any fixed
menu $A$, $U(x_{A^{p}})$ is a convex function of the posteriors
$p$ since we have $U(x_{A^{p}})=\max_{\lambda\in\Delta(\mathcal{V})}[b_{A^{p}}^{u}(\lambda)-c_{{\cal V}}^{*}(\lambda)]$.
Thus, for any mean preserving spread $\tau'$ of $\tau$, we have
$U(\{x_{A}(\tau')\})\geq U(\{x_{A}(\tau)\})$ implying that $c_{{\cal P}}^{*}(\tau')\geq c_{{\cal P}}^{*}(\tau)$with
strict inequality if $\tau'\neq\tau$.\qed

\paragraph*{Proof of Theorem \ref{comp}:}

The proof follows from Theorem 3 in \cite{MO18} and Corollary 4 in \cite{dDMO17}. \qed

\paragraph*{Proof of Proposition 1:}

By Theorem 2, $\succsim$ is represented by the Bayesian persuasion with uncertain bias model with parameters $(u,p_0,\Gamma,\lambda)$.
Then, $\succsim$ restricted to $\mathbb{A}^c$ has a random Strotz representation
$U(A)=\int_{\mathcal{V}}b^{u,v}_A \lambda(dv)$ for any $A\in\mathbb{A}^c$, where $b_{A}^{u,v}=\max_{f\in m_{v}(A)}u(f)$ and
$m_{v}(A)=\mathrm{argmax}_{f\in A}v(f)$.
Consider the following result, which we use for the proof of the proposition.

\begin{lemma} \label{Prop: M} 
Let $\succsim $ has a random Strotz representation over $\mathbb{A}^c$. 
Then $\succsim $ satisfies Axiom 11' (Stovall Finiteness) if and only if 
$\left\vert \mathrm{supp}(\lambda )\right\vert <N$.
\end{lemma}

\begin{proof}
\emph{[Necessity]:} Fix a random Strotz representation of $\succsim $ such that 
$\left\vert \mathrm{supp}(\lambda )\right\vert <N$. 
For any $A\in \mathbb{A}^{c}$ and $v\in \mathrm{supp}(\lambda )$, 
let $a_{A}^{v}\in \arg\max_{a\in A}v(a)$. 
Define $A^{\ast }\equiv \cup _{v\in \mathrm{supp}(\lambda )}\{a_{A}^{v}\}$. 
Since $\left\vert \mathrm{supp}(\lambda)\right\vert <N$, we have
$\left\vert A^{\ast }\right\vert <N$. 
We now show that $A^{\ast }$ is critical for $A$. 
Take any $B$ such that $A^{\ast }\subset \mathrm{co}(B)\subset \mathrm{co}(A)$. 
By the definition of $A^{\ast }$, we have
$\arg \max_{b\in B}v(b)=\arg \max_{a\in A^{\ast }}v(a)$ 
for all $v\in \mathrm{supp}(\lambda )$. 
Hence, $b_{A^{\ast }}^{u,v}=b_{B}^{u,v}$ for all $v\in \mathrm{supp}(\lambda )$, 
which implies that $A^{\ast }\sim B$.

\emph{[Sufficiency]:} Fix a random Strotz representation of $\succsim $ over $\mathbb{A}^c$. 
Let $E\subset \mathbb{A}^c$ be a sphere with full-support; that is, let $E$ be the surface of a closed ball. 
By Axiom 11', $E$ has a critical subset $E^{\ast }$ 
such that $\left\vert E^{\ast }\right\vert <N$.
We adapt the construction in \citet[p.958]{DLR09} to our setting. 
Since $E$ is a sphere, there is a one-to-one mapping $\kappa $ 
from $E$ to $\mathcal{V}$ where $\kappa (a)$ is 
a utility $v\in \cal V$ such that $a$ is the unique maximizer of $b\cdot v$ over $b\in E$. 
This means that $a$ is chosen by $v$ from $E$. 
Let $\mathcal{V}^{\ast }=\kappa (E^{\ast })
=\left\{ v\in \mathcal{V}:\kappa (a)=v
\text{ for some }a\in E^{\ast }\right\}$. Since $\left\vert E^{\ast }\right\vert <N$, we have
$\left\vert \mathcal{V}^{\ast}\right\vert <N$.

We want to show that \textrm{supp}$(\lambda )\subset \mathcal{V}^{\ast }$, and so 
we can conclude $\left\vert \mathrm{supp}(\lambda )\right\vert <N$. 
Suppose, for contradiction, that there is 
$v^{\prime }\in \textrm{supp}(\lambda )\setminus \mathcal{V}^{\ast }$. 
Let $a^{\prime }\in E$ be such that $\kappa (a^{\prime })=v^{\prime }$. Comparing $a^{\prime}$ against the choices in  $E^*$, we can decompose $\cal V$ into three sets, $V_+, V_0, V_{-}$ where $V_+=\{v\in\mathcal{V}:v(a')> \max_{a\in E^*}v(a)\}$, $V_0=\{v\in\mathcal{V}:v(a')= \max_{a\in E^*}v(a)\}$ and $V_{-}=\{v\in\mathcal{V}:v(a')< \max_{a\in E^*}v(a)\}$. By assumption, $v'\in V_+$; since $v'$ is in the support, $\lambda(V_+)>0$. We have $U(E^*\cup\{a'\})=\lambda(W)u(a')
    +\int_{\cal V \setminus W}b_{E^{\ast }}^{u,v}\lambda(dv)$ where $W=V_+\cup V_0$ if $u(a')\geq \max_{a\in E^*}u(a)$ and $W=V_0 \cup V_{-}$ if $u(a')< \max_{a\in E^*}u(a)$. Thus, $E^{\ast }\cup \{a^{\prime }\}\sim E^{\ast }$ if and only if $\lambda(W)u(a') = \int_{W}b_{E^{\ast }}^{u,v}\lambda(dv)$. In general, this equality cannot hold since we can always slightly perturb $a'$ to another $a''\in E$. Thus, we have $E^{\ast }\cup \{a^{\prime }\}\nsim E^{\ast } \sim E$ which violates Axiom 11', a contradiction. Hence, $\left\vert \mathrm{supp}(\lambda )\right\vert <N$.
\end{proof}

By Lemma \ref{Prop: M} given above, we obtain the proof of Proposition \ref{Prop: N}. \qed

\paragraph*{Proof of Proposition 2:} The proof follows from the proof of Lemma \ref{Prop: M} once we note that \textrm{supp}$(\lambda )\subset \mathcal{V}^{\ast }$ implies a finite support in either direction. \qed

\paragraph*{Proof of Proposition 3:}
By Theorem \ref{costly}, $\succsim$ is represented by the costly persuasion model with parameters $(u,p_0,c_{\cal{P}},\lambda)$. Suppose, for contradiction, that the support of the distribution $\lambda$ has more than two distinct utilities in $\cal{V}$. Let $a,b\in \Delta(X)$ such that $u(a)> u(b)$ and $\lambda_a=\lambda(\,\{v\in \mathcal{V}:v(a)\geq v(b)\}\,)>0$, $\lambda_b = \lambda(\,\{v\in\mathcal{V}:v(a)<v(b)\}\,)>0$. By construction, we have $U(\{a,b\}) = \lambda_a u(a) + \lambda_b u(b)$, and so $\{a,b\} \nsim \{a\}$ and $\{a,b\} \nsim \{b\}$ which contradicts Axiom 11. \qed

\paragraph*{Proof of Corollary 1:}
By Proposition 3, $\succsim$ is represented by the costly persuasion model with parameters $(u,p_0,c_{\cal{P}},v)$. Since Axiom $11'''$ holds, for any $a,b\in \Delta(X)$ such that $u(a)\geq u(b)$, we must have  $\{a,b\} \sim \{a\}$, and so  $v(a)\geq v(b)$ which implies that $u=v$. \qed

\paragraph*{Proof of Proposition 4:}
By Theorem \ref{sequential}, $\succsim$ is represented by the sequential persuasion model with parameters $(u,p_0,c_{\cal{P}},c_{\cal{V}})$. By Axiom 8' and the proof of Theorem  \ref{sequential}, for any $A \in \mathbb{A}$ we have $U(A)=U(x_A(\tau))-c_{\cal{P}}(\tau)$. Since $c_{\cal{P}}$ can be taken as grounded, we have $c_{\cal{P}}(\tau)=0$, and so  $\succsim$ is represented by sequential persuasion model with parameters  $(u,p_0,\tau,c_{\cal{V}})$ where $\tau \in \Delta_0(\cal P)$. \qed

\paragraph*{Proof of Corollary 2:}
By Proposition 4, $\succsim$ is represented by the sequential persuasion model with parameters $(u,p_0,\tau,c_{\cal{V}})$ with $\tau \in \Delta_0(\cal P)$. By Axiom 8'', $\tau$ can be taken as $\delta_{p_0}$. \qed

\bibliography{references}
\bibliographystyle{ecta}

\end{document}